\newtheorem{theorem}{Theorem}
\theoremstyle{plain}
\newtheorem{definition}{Definition}
\newtheorem{lemma}{Lemma}
\newtheorem{remark}{Remark}
\numberwithin{equation}{section}
\begin{document}
\title[A discrete generalization of Hill's statistic.]{On a discrete Hill's statistical process based on sum-product statistics and
its finite-dimensional asymptotic theory.}
\author{Gane Samb Lo.}
\address{Universit\'{e} Gaston Berger de Saint-Louis, LERSTAD, Bp 234, Saint-Louis,
S\'{e}n\'{e}gal.}
\email{ganesamblo@ufrsat.org}

\begin{abstract}
The following class of sum-product statistics 
\begin{equation*}
T_{n}(p)=\frac{1}{k}\overset{}{\overset{p}{\underset{h=1}{\sum }}\underset{%
(s_{1}......s_{h})\in P(p,h)\text{ \ }}{\sum \text{ \ \ }}\overset{i_{0}}{%
\underset{i_{1}=l+1}{\sum ...}}\overset{}{\underset{}{\overset{i_{h-1}}{%
\underset{i_{h}=l+1}{\sum }}}\text{\ }i_{h}\overset{i_{h}}{\underset{i=i_{1}%
}{\Pi }}}\overset{}{\frac{\left( Y_{n-i+1,n}-Y_{n-i,n}\right) ^{s_{i}}}{%
s_{i}!}}}
\end{equation*}
(where $l,$ $k=i_{0}$ and n are positive integers, $0<l<k<n,$ $P(p,h)$ is
the set of all ordered parititions of $\ p>0$ into $\ h$ positive integers
and $Y_{1,n}\leq ...\leq Y_{n,n}$ are the order statistics based on a
sequence of independent random variables $Y_{1},$ $Y_{2},...$with underlying
distribution $\mathbb{P}(Y\leq y)=G(Y)=F(e^{y})$), is introduced. For each
p, $T_{n}(p)^{-1/p}$ is an estimator of the index of a distribution whose
upper tail varies regularly at infinity. \ This family generalizes the so
called Hill statistic and the Dekkers-Einmahl-De Haan one. We study the
limiting laws of the process $\left\{ T_{n}(p),1\leq p<\infty \right\} $ and
completely describe the covariance function of the Gaussian limiting process
with the help \ of combinatorial techniques. Many results available for
Hill's statistic regarding asymptotic normality and laws of the iterated
logarithm are extended to each margin $T_{n}(p,k)$, for $p$ fixed, and for
any distribution function lying in the extremal domain. In the process, we
obtain special classes of numbers related to those of paths joining the
opposite coins within a parallelogram.
\end{abstract}

\maketitle

\Large

\section{\protect\bigskip INTRODUCTION}

Let $X_{1},$ $X_{2},...$ be \ a sequence of independent copies of a random
variable $\left( r.v.\right) $ with $\mathbb{P}\left( X_{_{i}}\leq x\right)
=F(x),x\in \mathbb{R}$ and $F(1)=0$ and let $X_{1,n}\leq ...\leq X_{n,n}$
denote the corresponding order \ statistics. Hill (1975) introduced

\begin{equation}
T_{n}(1,k,l)=\frac{1}{k}\text{ }\overset{}{\underset{l+1\leq j\leq k}{\sum }%
\text{ }j(\log X_{n-j+1,n}-\log X_{n-j,n})}  \label{alpha1}
\end{equation}
(where $l,$ $k$ and $n$ are integers such that 0$\leq l=l(n)<k=k(n)<n,$ $%
l/k\rightarrow 0,k\rightarrow \infty ,$ $k/n\rightarrow 0,$ log stands for
the Neperian logarithm) as an estimator for the exponent $\gamma ^{-1}>0$ \
of a distribution function $(d.f.)$ whose the upper tail varies regulary at
infinity, that is :

\begin{equation}
\forall \lambda >0,\underset{x\rightarrow +\infty }{\lim }\text{ }%
(1-F(\lambda x)\text{ }/\text{ }(1-F(x)=\lambda ^{-\gamma }  \label{alpha2}
\end{equation}
Mason in \cite{mason} proved that the strong or weak convergence of $%
T_{n}(1,k,1)$ $\ $for $k=\left[ n^{\alpha }\right] ,$ $0<\alpha <1,$ to some 
$\gamma ^{-1}>0$ for one value of $\alpha ,0<\alpha <1,$ is characteristic
of $\ d.f.$'s statisfying (\ref{alpha1}). This means in fact that $F$
belongs to the extremal domain of attraction of a Fr\'{e}chet's law of
exposent $\gamma ^{-1}$. It is well-known in extreme value theory that a $%
d.f.$ $F$ is attracted to some non-degenerate $d.f.M$, that is there exists
two sequences of real numbers $\left( a_{n}>0\right) _{n\geq 0}$ and $\left(
b_{n}\right) _{n\geq 0}$ such that.

\begin{equation}
\forall x\in\mathbb{R},\text{ \ \ }\underset{x\rightarrow+\infty}{\lim}\text{
}F^{n}(a_{n}x+b_{n})=M(x),  \label{alpha3}
\end{equation}
iff $M$ is the Fr\'{e}chef type of $d.f.$ of parameter $\gamma>0$, 
\begin{equation}
\varphi_{\gamma}(x)=\left\{ 
\begin{array}{cc}
\exp(-x^{-\gamma}) & x\geq0 \\ 
0 & elsewhere
\end{array}
\right. ,  \label{alpha3a}
\end{equation}
or the Weibull type of $d.f.$ of parameter $\gamma>0,$

\begin{equation}
\psi_{\gamma}(x)=\left\{ 
\begin{array}{cc}
\exp(-(-x)^{\gamma}) & x\leq0 \\ 
1 & elsewhere
\end{array}
\right.  \label{alpha4}
\end{equation}
or the Gumbel type of $d.f.,$

\begin{equation}
\Lambda (x)=\exp (-e^{-x}),\text{ }x\in \mathbb{R}  \label{alpha5}
\end{equation}
A huge amount of statistical applications for modeling extremes ares based
on these limiting laws. For instance, Resnick (1987) cited a wide range of
phenomena described by extreme value theory. Further, in order to
characterize the d.f.'s attracted to one of these three d.f.'s, L\^{o} (see 
\cite{gslo3}) introduced

\begin{equation}
T_{n}(2,k)=\frac{1}{k}\text{ }\overset{i=k}{\underset{i=l+1}{\sum }}\text{\ }%
i(1-\delta _{ij}/2)(\log X_{n-i+1,n}-\log X_{n-i,n})(\log X_{n-j+1,n}-\log
X_{n-j,n})  \label{alpha6}
\end{equation}
and showed that the couple $\left( T_{n}(1,k,l),T_{n}(2,k,l)\right) $
combined with some other auxilliary statistics characterizes the whole
domain of attraction and each of the three domains of attraction. These
characterizing statistics are studied in details in \cite{gslo1} and \cite
{gslo2}, corresponding laws of the iterated logarithm are given. $%
T_{n}(2,k,l)$ may be viewed as a second dorder form of $T_{n}(1,k,l).$ In
the same view, Dekkers, Einmahl and De Haan introduced a moment estimator in 
\cite{dedh} 
\begin{equation}
A_{n}=\frac{1}{k}\text{ }\overset{i=k}{\underset{i=1}{\sum }}\text{\ }i(\log
X_{n-i+1,n}-\log X_{n-i,n})^{2}  \label{alpha6a}
\end{equation}
And it happens that T$_{n}(2,k,1)=2A_{n.}$ The importance of this second
order estamator is that the couple $\left( T_{n}(1,k,l),T_{n}(2,k,l)\right) $
separates the whole domain of attaction in the sens that the couple $\left(
T_{n}(1,k,l),\left( T_{n}(1,k,l)T_{n}(2,k,l)^{-1/2}\right) \right) $ takes
three limiting values according to three the possibilities : F$\in D(\phi
_{\gamma }),$ F$\in D(\psi _{\gamma }),$ F$\in D(\Lambda ).$ Theses previous
works of many authors cleary suggested extensions to higher orders and
advocated a stochastic process view. Our aim is to generalize results
available for $T_{n}(1,k,l)$ and $T_{n}(2,k,l)$ for higher order forms $%
T_{n}(p,k),p\geq 1,$ and to study the limiting behavior of the process $%
\left\{ T_{n}(p,k),1\leq p<+\infty \right\} .$

Let us make further notations before defining $T_{n}(p,k,l)\equiv T_{n}(k).$
Let $P(p,h)$ be the set of all ordered partitions of $p>0$ into positive
integers, $1\leq h\leq p:$

\begin{equation}
P(p,h)=\left\{ (s_{1}...s_{h}),\forall i,\text{ }1\leq i\leq
h,s_{i}>0;s_{1}+...+s_{h}=p\right\} ,1\leq h\leq p.  \label{alpha7}
\end{equation}
\qquad\qquad\qquad

Let $Y_{1,n}\leq ...\leq Y_{n,n}$ be the order statistics of the $n$ first
of a sequence of independent random variables $Y_{1},Y_{2},...$ such that $%
P(Y\leq y)=G(y)=F(e^{y}),$ $y\geq 0.$ Now define for $1\leq l<k<n,$ $p\geq
1, $\ $i_{0}=k,$%
\begin{equation}
T_{n}(p)=\frac{1}{k}\overset{}{\overset{p}{\underset{h=1}{\sum }}\underset{%
(s_{1}......s_{h})\in P(p,h)\text{ \ }}{\sum \text{ \ \ }}\overset{i_{0}}{%
\underset{i_{1}=l+1}{\sum }}...\overset{}{\underset{}{\overset{i_{h-1}}{%
\underset{i_{h}=l+1}{\sum }}}\text{\ }i_{h}\overset{i_{h}}{\underset{i=i_{1}%
}{\Pi }}}\overset{}{\frac{\left( Y_{n-i+1,n}-Y_{n-i,n}\right) ^{s_{i}}}{%
s_{i}!}}}.  \label{alpha7a}
\end{equation}

It is easy to check that $T_{n}(1)$ is Hill's statistic, $T_{n}(2)$ is that
of $(1.3)$ and, for instance, for p=3, we have

\begin{align}
T_{n}(3)& =\frac{1}{k}\overset{}{\{\sum_{j=l+1}^{j=k}\frac{j}{6}}\left(
Y_{n-j+1,n}-Y_{n-j}\right) ^{3}+\overset{}{\underset{}{\overset{i=k}{%
\underset{j=l+1}{\sum }}}\overset{}{\underset{}{\overset{i=j}{\text{ }%
\underset{i=l+1}{\sum }}}}}\frac{i}{2}\{\left( Y_{n-j+1,n}-Y_{n-j,n}\right)
^{2}  \notag \\
& \times \left( Y_{n-i+1,n}-Y_{n-i,n}\right) +(Y_{n-j+1,n}-Y_{n-j,n}\left(
Y_{n-i+1,n}-Y_{n-i,n}\right) ^{2}  \notag \\
& +\overset{i=j}{\underset{i=l+1\text{ \ }}{\sum }}\overset{m=i}{\underset{%
m=l+1}{\sum }}m\left( Y_{n-m+1,n}-Y_{n-m,n}\right) \left(
Y_{n-i+1,n}-Y_{n-i,n}\right) \times  \notag \\
& (Y_{n-j+1,n}-Y_{n-j,n})\}
\end{align}

\bigskip \qquad

\qquad Our aim in this paper is to establish the asymptotic normality theory
for the finite-dimensional distribution of this process. Our best
achievement is the convergence of the finite-dimensional distributions of
the process $T_{n}=\left( T_{n}(p),1\leq p<\infty \right) $ when suitably
centered and normalized to those of a Gaussian process $\left\{ I(p),1\leq
p<\infty \right\} .$ The computation the covariance function of this
Gaussian process required much combinatorial calculations. This
combinatorial work, along with the study of the numbers classes which appear
for the varaince calculation, is the one of the main part of this paper
since the results to be used for the $d.f.$'s $F$ are largely developped in 
\cite{gslo1}, \cite{gslo2} and \cite{gslo3} and other authors.

First, we state limit theorems in section II along \ with the methods to
compute the covariance function of $I(p).$ Section 3 is devoted to prove the
combinatorial relations in section 2. Finally, section 4 is devoted to the
proofs or the theorems. The reader is is referred to \cite{leadrotz}, \cite
{resnick}, \cite{kotz}, \cite{reiss}, and \cite{galambos} for general \
references on extreme values theory.

\bigskip

Before we state our results, we mention that a continuous generalization of
the Hill statistics has been given and studied (\cite{dioplo2006}).

\section{STATEMENT AND DESCRIPTION OF THE RESULTS}

We begin with the description of a class of numbers involved in this work.
We introduced three classes of numbers that will be used for the covariance
computations.

\begin{definition}
\label{def21}The non-negative integers $\beta (v,r),$ $v\geq 0,$ $r\geq 1$
defined by
\end{definition}

\begin{enumerate}
\item[i.]  $\forall v\geq 0,$ $\beta (v,1)=1$

\item[ii.]  $\forall v\geq 1,$ $\beta (v,2)=1$

\item[iii.]  $\forall r\geq 3,$ $\beta (1,r)=\beta (2,r-1)+(1,r-1)$

\item[iv.]  $\forall r\geq 2,$ $\beta (0,r)=\beta (1,r-1)$

\item[v.]  $\forall v\geq 2,$ $r\geq 3,$ $\beta (v,r)=\beta (v+1,r-1)+\beta
(v-1,r)$
\end{enumerate}

are called the type I numbers.

\qquad

\qquad We have the two following clog rules associated with points (iii) and
(v) of this definition.

\begin{center}
$
\begin{array}{cc}
\begin{tabular}[b]{|c|c|}
\hline
v & u+v \\ \hline
u &  \\ \hline
&  \\ \hline
\end{tabular}
& 
\begin{tabular}[b]{|c|c|}
\hline
& v \\ \hline
& u+v \\ \hline
u &  \\ \hline
\end{tabular}
\end{array}
$
\end{center}

The repeated application of them leads to the computation of all the numbers
of this class. For instance for $v\leq 10$, $r\leq 10$, we have the
different values of $\beta (\nu ,r)$

$
\begin{tabular}[t]{|c|c|c|c|c|c|c|c|c|c|c|}
\hline
$v\backslash r$ & $1$ & $2$ & $3$ & $4$ & $5$ & $6$ & $7$ & $8$ & $9$ & $10$
\\ \hline
$0$ & $1$ & $1$ & $1$ & $2$ & $5$ & $14$ & $42$ & $132$ & $429$ & $1430$ \\ 
\hline
$1$ & $1$ & $v=1$ & $u+v=2$ & $5$ & $14$ & $42$ & $132$ & $429$ & $1430$ & $%
3862$ \\ \hline
$2$ & $1$ & $u=1$ & $3$ & $9$ & $28$ & $90$ & $297$ & $1001$ & $2432$ & $%
7294 $ \\ \hline
$3$ & $1$ & $1$ & $4$ & $t=14$ & $48$ & $165$ & $572$ & $2002$ & $6072$ & 
\\ \hline
$4$ & $1$ & $1$ & $5$ & $t+s=20$ & $75$ & $275$ & $1001$ & $3640$ &  &  \\ 
\hline
$5$ & $1$ & $1$ & $s=6$ & $27$ & $110$ & $429$ & $1638$ &  &  &  \\ \hline
$6$ & $1$ & $1$ & $7$ & $35$ & $154$ & $637$ &  &  &  &  \\ \hline
$7$ & $1$ & $1$ & $8$ & $44$ & $208$ &  &  &  &  &  \\ \hline
$8$ & $1$ & $1$ & $9$ & $54$ &  &  &  &  &  &  \\ \hline
$9$ & $1$ & $1$ & $10$ &  &  &  &  &  &  &  \\ \hline
$10$ & $1$ & $1$ & $1$ &  &  &  &  &  &  &  \\ \hline
\end{tabular}
\ $

\begin{remark}
The second clog rule is that of the numbers $\beta ^{\ast }(v,r)$ of paths
in $\mathbb{Z}^{2}$ joining $\left( 0,0\right) $ to $\left( v+r,r\right) $
within the parallelogram $\left[ (0,0),(v,0),(v+r,r),(r,r)\right] $ which is
\end{remark}

\begin{equation}
\beta(v,r)=\left( 
\begin{array}{c}
r \\ 
v+2r
\end{array}
\right) -2\left( 
\begin{array}{c}
r-2 \\ 
v+2r
\end{array}
\right) ,r\leq v+1  \label{alpha10}
\end{equation}
(See \cite{kreweras}). The differences between $\beta^{\ast(.,.)}$ and $%
\beta(.,.)$ are the following:

\begin{enumerate}
\item  \ $(v,r)$ satisfies $r\leq v+1$ for $\beta^{\ast}(.,.)$ while $%
v\geq0,\gamma\geq2$ are arbitrary in $\beta(.,.).$

\item  The first lines $(v=0$ $and$ $v=1)$ differ for $\beta\ast(.,.)$ $and $
$\beta(.,.).$
\end{enumerate}

\begin{definition}
\label{def22}The non-negative intergers $\mu _{\tau }(0,v,\delta ),1\leq
\delta \leq \tau $ verifying
\end{definition}

\begin{enumerate}
\item[i]  $\forall $ $\delta \geq 0,$ $\ \ \forall \tau \geq 1,$ $\ \mu
_{\tau }(0,v,\tau )=1$

\item[ii]  $\forall $ $\delta ,$ $\ 1\leq \delta <\tau ,$ $\ \mu _{\tau
}(0,0,\delta )=1$

\item[iii]  $\forall $ $\delta ,$ $\ 1\leq \delta \tau ,$ $\ \forall v\geq
1, $ $\ \mu _{\tau }(0,v,\delta )=\mu _{\tau }(0,v-1,\delta )+(0,v,\delta
+1) $
\end{enumerate}

are called the $\tau -class$ of $type$ II numbers.

\begin{remark}
\bigskip Point (iii) of this definition yields the clog rule 
\begin{equation*}
\begin{tabular}[b]{|c|c|}
\hline
u &  \\ \hline
u+v & v \\ \hline
\end{tabular}
\end{equation*}
which enables to compute easily these numbers . Here are some examples.
\end{remark}

\begin{center}
$
\begin{array}{ccc}
\begin{tabular}{cc}
$\tau=1$ &  \\ \hline
\multicolumn{1}{|l}{$v\backslash\delta$} & \multicolumn{1}{|l|}{$1$} \\ 
\hline
\multicolumn{1}{|l}{$0$} & \multicolumn{1}{|l|}{$1$} \\ \hline
\multicolumn{1}{|l}{$1$} & \multicolumn{1}{|l|}{$1$} \\ \hline
\multicolumn{1}{|l}{$2$} & \multicolumn{1}{|l|}{$1$} \\ \hline
\multicolumn{1}{|l}{$3$} & \multicolumn{1}{|l|}{$1$} \\ \hline
\multicolumn{1}{|l}{$4$} & \multicolumn{1}{|l|}{$1$} \\ \hline
\multicolumn{1}{|l}{$5$} & \multicolumn{1}{|l|}{$1$} \\ \hline
\end{tabular}
& 
\begin{tabular}{ccc}
& $\tau=2$ &  \\ \hline
\multicolumn{1}{|c}{$v\backslash\delta$} & \multicolumn{1}{|c}{$1$} & 
\multicolumn{1}{|c|}{$2$} \\ \hline
\multicolumn{1}{|c}{$0$} & \multicolumn{1}{|c}{$1$} & \multicolumn{1}{|c|}{$%
1 $} \\ \hline
\multicolumn{1}{|c}{$1$} & \multicolumn{1}{|c}{$2$} & \multicolumn{1}{|c|}{$%
1 $} \\ \hline
\multicolumn{1}{|c}{$2$} & \multicolumn{1}{|c}{$3$} & \multicolumn{1}{|c|}{$%
1 $} \\ \hline
\multicolumn{1}{|c}{$3$} & \multicolumn{1}{|c}{$4$} & \multicolumn{1}{|c|}{$%
1 $} \\ \hline
\multicolumn{1}{|c}{$4$} & \multicolumn{1}{|c}{$5$} & \multicolumn{1}{|c|}{$%
1 $} \\ \hline
\multicolumn{1}{|c}{$5$} & \multicolumn{1}{|c}{$6$} & \multicolumn{1}{|c|}{$%
1 $} \\ \hline
\end{tabular}
& 
\begin{tabular}{cccc}
& $\tau=3$ &  &  \\ \hline
\multicolumn{1}{|c}{$v\backslash\delta$} & \multicolumn{1}{|c}{$1$} & 
\multicolumn{1}{|c}{$2$} & \multicolumn{1}{|c|}{$3$} \\ \hline
\multicolumn{1}{|c}{$1$} & \multicolumn{1}{|c}{$1$} & \multicolumn{1}{|c}{$1$%
} & \multicolumn{1}{|c|}{$1$} \\ \hline
\multicolumn{1}{|c}{$2$} & \multicolumn{1}{|c}{$3$} & \multicolumn{1}{|c}{$2$%
} & \multicolumn{1}{|c|}{$1$} \\ \hline
\multicolumn{1}{|c}{$3$} & \multicolumn{1}{|c}{$6$} & \multicolumn{1}{|c}{$3$%
} & \multicolumn{1}{|c|}{$1$} \\ \hline
\multicolumn{1}{|c}{$4$} & \multicolumn{1}{|c}{$10$} & \multicolumn{1}{|c}{$%
4 $} & \multicolumn{1}{|c|}{$1$} \\ \hline
\multicolumn{1}{|c}{$5$} & \multicolumn{1}{|c}{$15$} & \multicolumn{1}{|c}{$%
5 $} & \multicolumn{1}{|c|}{$1$} \\ \hline
\multicolumn{1}{|c}{$5$} & \multicolumn{1}{|c}{$21$} & \multicolumn{1}{|c}{$%
6 $} & \multicolumn{1}{|c|}{$1$} \\ \hline
\end{tabular}
\end{array}
$
\end{center}

\bigskip

Finally,

\begin{definition}
\label{def23}The $\tau -class$ of $type$ III numbers are the non-negative
integers $\mu _{\tau }(1,v,\delta ),$ $v\geq 0,$ $\tau \geq 1$ satisfying
points (i), (iii), (iv), (iv) anf (v) of definition \ref{def21} for $\delta
=r$ and, at the place of (ii),
\end{definition}

\begin{equation}
\forall (v\geq 1),\text{ \ \ }\mu _{\tau }(1,v,2)=\overset{v+1}{\underset{k=1%
}{\sum }}\mu _{\tau }(0,k,1)  \label{alpha11}
\end{equation}
Here are examples of computation

\begin{tabular}{cccccc}
&  & $\tau=1$ &  &  &  \\ \hline
\multicolumn{1}{|c}{$v\backslash\delta$} & \multicolumn{1}{|c}{$1$} & 
\multicolumn{1}{|c}{$2$} & \multicolumn{1}{|c}{$3$} & \multicolumn{1}{|c}{$4$%
} & \multicolumn{1}{|c|}{$5$} \\ \hline
\multicolumn{1}{|c}{$0$} & \multicolumn{1}{|c}{$1$} & \multicolumn{1}{|c}{$1$%
} & \multicolumn{1}{|c}{$2$} & \multicolumn{1}{|c}{$5$} & 
\multicolumn{1}{|c|}{$14$} \\ \hline
\multicolumn{1}{|c}{$2$} & \multicolumn{1}{|c}{$1$} & \multicolumn{1}{|c}{$3$%
} & \multicolumn{1}{|c}{$9$} & \multicolumn{1}{|c}{$28$} & 
\multicolumn{1}{|c|}{} \\ \hline
\multicolumn{1}{|c}{$3$} & \multicolumn{1}{|c}{$1$} & \multicolumn{1}{|c}{$4$%
} & \multicolumn{1}{|c}{$14$} & \multicolumn{1}{|c}{} & \multicolumn{1}{|c|}{
} \\ \hline
\multicolumn{1}{|c}{$4$} & \multicolumn{1}{|c}{$1$} & \multicolumn{1}{|c}{$5$%
} & \multicolumn{1}{|c}{} & \multicolumn{1}{|c}{} & \multicolumn{1}{|c|}{}
\\ \hline
\multicolumn{1}{|c}{$5$} & \multicolumn{1}{|c}{$1$} & \multicolumn{1}{|c}{}
& \multicolumn{1}{|c}{} & \multicolumn{1}{|c}{} & \multicolumn{1}{|c|}{} \\ 
\hline
\multicolumn{1}{|c}{} & \multicolumn{1}{|c}{} & \multicolumn{1}{|c}{} & 
\multicolumn{1}{|c}{} & \multicolumn{1}{|c}{} & \multicolumn{1}{|c|}{} \\ 
\hline
&  & $Tab$ & $2.5$ &  & 
\end{tabular}
\ \ \ \ \ $
\begin{tabular}{ccccccc}
&  & $\tau=2$ &  &  &  &  \\ \cline{1-6}
\multicolumn{1}{|c}{$v\backslash\delta$} & \multicolumn{1}{|c}{$1$} & 
\multicolumn{1}{|c}{$2$} & \multicolumn{1}{|c}{$3$} & \multicolumn{1}{|c}{$4$%
} & \multicolumn{1}{|c}{$5$} & \multicolumn{1}{|c}{} \\ \cline{1-6}
\multicolumn{1}{|c}{$0$} & \multicolumn{1}{|c}{$1$} & \multicolumn{1}{|c}{$1$%
} & \multicolumn{1}{|c}{$5$} & \multicolumn{1}{|c}{$14$} & 
\multicolumn{1}{|c}{$42$} & \multicolumn{1}{|c}{} \\ \cline{1-6}
\multicolumn{1}{|c}{$1$} & \multicolumn{1}{|c}{$1$} & \multicolumn{1}{|c}{$5$%
} & \multicolumn{1}{|c}{$14$} & \multicolumn{1}{|c}{$42$} & 
\multicolumn{1}{|c}{$132$} & \multicolumn{1}{|c}{} \\ \cline{1-6}
\multicolumn{1}{|c}{$2$} & \multicolumn{1}{|c}{$1$} & \multicolumn{1}{|c}{$9$%
} & \multicolumn{1}{|c}{$28$} & \multicolumn{1}{|c}{$28$} & 
\multicolumn{1}{|c}{$90$} & \multicolumn{1}{|c}{} \\ \cline{1-6}
\multicolumn{1}{|c}{$3$} & \multicolumn{1}{|c}{$1$} & \multicolumn{1}{|c}{$%
14 $} & \multicolumn{1}{|c}{$48$} & \multicolumn{1}{|c}{} & 
\multicolumn{1}{|c}{} & \multicolumn{1}{|c}{} \\ \cline{1-6}
\multicolumn{1}{|c}{$4$} & \multicolumn{1}{|c}{$1$} & \multicolumn{1}{|c}{$%
20 $} & \multicolumn{1}{|c}{} & \multicolumn{1}{|c}{} & \multicolumn{1}{|c}{}
& \multicolumn{1}{|c}{} \\ \cline{1-6}
\multicolumn{1}{|c}{$5$} & \multicolumn{1}{|c}{$1$} & \multicolumn{1}{|c}{}
& \multicolumn{1}{|c}{} & \multicolumn{1}{|c}{} & \multicolumn{1}{|c}{} & 
\multicolumn{1}{|c}{} \\ \cline{1-6}
&  & $Tab$ & $2.6$ &  &  & 
\end{tabular}
$

\bigskip

We finish by remarking that these tables are very quickly filled with
standard softwares. We are now able to describe the gaussian process
involved here.

\subsection{THE LIMITING GAUSSIAN PROCESSES}

\begin{definition}
The time series $\left\{ I(r),r=1,2,...\right\} $ is called an extremal
Gaussian process if and only if $\mathbb{E}(I(r))=0$ for $r\geq1$ and its
variance and covariance $\mathbb{E}(I(r)^{2})=\sigma^{2}(r)$ and $\mathbb{E}%
(I(r)I(\rho))=\sigma(r,\rho)$ statisfy for some positive functions $C_{1}(.)$
and $C_{2}(.,)$
\end{definition}

i) $a(0)=1,$ $a(1)=2$

ii) $\forall (r\geq 1),$ \ $\sigma ^{2}(r)=C_{1}(r)a(r)$ $and$ $a(r)=2$ $%
\overset{}{\overset{j=r}{\underset{j=1}{\sum }}\beta (1,j)\text{ }a(r-j)}$

iii) $\forall1\leq r<\rho,$ \ $\sigma(r,\rho)=C_{2}(r)$ $\overset{}{\overset{%
j=r}{\underset{j=0}{\sum}}\mu_{p-r}(11,j)(a(r-j)}$

\bigskip with by \ convention $\mu _{\tau }(1,1,1)=\mu _{\tau }(0,1,1),\mu
_{\tau }(1,1,0)=1$ for all $\tau \geq 1.$

\qquad Here again, the values of $\sigma^{2}(r)$ and $\sigma(r,\rho)$ are
easily computed for $C_{1}(r)\equiv C_{1}(r,\rho)\equiv1.$

\bigskip\ \ \ \ \ \ \ \ \ \ \ \ \ \ \ \ \ \ \ \ \ \ \ \ \ \ \ \ \ \ \ \ \ \
\ \ \ \ \ \ \ \ \ \ \ 
\begin{eqnarray*}
&& 
\begin{tabular}{|c|c|c|c|c|}
\hline
$4$ & $5$ & $11$ & $29$ & $70$ \\ \hline
$3$ & $4$ & $9$ & $20$ &  \\ \hline
$2$ & $3$ & $6$ &  &  \\ \hline
$1$ & $2$ &  &  &  \\ \hline
$r/\rho $ & $1$ & $2$ & $3$ &  \\ \hline
\end{tabular}
\\
&&\text{ \ \ \ \ \ \ \ \ }Tab\text{ }2.8
\end{eqnarray*}

\ \ \ \ \ \ \ \ \ \ \ \ \ \ \ \ \ \ \ \ \ \ \ \ \ \ \ \ \ \ \ \ \ \ \ \ \ \
\ \ \ \ \ \ \ \ \ \ \ \ \ \ \ \ \ \ \ \bigskip

This table cleary shows that this process is not not stationary since, for
instance,

\begin{equation}
\sigma(2,1)=3\neq9=\sigma(3,2)\neq29=\sigma(4,3)  \label{alpha12}
\end{equation}

\begin{definition}
The time series $\left\{ I(p)+e(p)Z,\ \ p=1,2,...\right\} ,$where $e\geq1$
is a real function and Z is a standard Gaussian $r.v.$ such that $\mathbb{E}%
(I(p)Z)=-1$ for all $p=1,2,...,$ is called a reduced extremal process..
\end{definition}

\subsection{LIMIT THEOREMS FOR $T_{n}(p).$}

\bigskip

Let $y_{0}=\sup \left\{ x,\text{ }G(x)<1\right\} ,$ 0$\leq x\leq z\leq y_{0}$
and

\begin{equation}
m_{1}(x,z)=\int_{x}^{z}(1-G(t))\text{ }dt  \label{alpha13}
\end{equation}

and for p$\geq 2$%
\begin{equation}
m_{p}(x,z)=\int_{x}^{z}\int_{y_{1}}^{z}....\int_{y_{p-1}}^{z}(1-G(t))\text{ }%
dt\text{ }dy_{1}...dy_{p-1}  \label{alpha14}
\end{equation}
with \ $m_{p}(x,y_{0})\equiv m_{p}(x),$ \ $p\geq 1.$ In the remainder of the
paper, we shall use, without any loss of generality$,$ the following
representations of the order statistics of Y : 
\begin{equation}
\{Y_{1,n}\leq ...\leq Y_{n,n},n\geq 1\}=\{G^{-1}(U_{1,n})\leq
G^{-1}(U_{2,n})\leq ....\leq G^{-1}(U_{n,n}),n\geq 1\}  \label{alpha15}
\end{equation}
where $U_{1,n}\leq ...\leq U_{n,n}$ are the order statistics of a sequence
of independent uniform $r.v$'s on $(0,1)$. Put now

\begin{equation}
x_{n}=G^{-1}(1-k/n),\widetilde{x}_{n}=G^{-1}(1-U_{k+1,n}),\text{ }%
z_{n}=G^{-1}(1-l/n),\widetilde{z}_{n}=G^{-1}(1-U_{l+1,n})  \label{alpha16}
\end{equation}
and 
\begin{equation}
\tau _{p}(x_{n},z_{n})=\frac{n}{k}\text{ }m_{p}(x_{n},z_{n}),p\geq 1
\label{alpha17}
\end{equation}

Finally, let \ $D(\varphi_{\gamma}),\gamma>0,$ be the set of $d.f.$'s $F$
satisfying \ref{alpha2}, D($\psi_{\gamma})$ the d.f.'s such that $%
x_{0}=\sup\left\{ x,F(x)<1\right\} <+\infty$ and $F(x_{0}-\frac{1}{\bullet}%
)\in D(\varphi_{\gamma})$ and $D(\Lambda)$ the set of $d.f.$'s $F$ such that

\begin{equation}
\underset{u\rightarrow0}{\lim}\text{ }\frac{F^{-1}(1-xu)-F^{-1}(1-u)}{%
F^{-1}(1-yu)-F^{-1}(1-u)}=\frac{\log x}{\log y}  \label{alpha18}
\end{equation}
for all $\ x>0,y>0,y\neq1.$

It is clear that $\Gamma=D(\Lambda)\cup D(\varphi)\cup D(\psi)$ where $%
D(\varphi)=U_{\gamma>0}D(\varphi_{\gamma})$ and $D(\psi)=U_{\gamma>0}D(%
\psi_{\gamma}),$ is the set of all d.f.'s attracted to some non degenerate
d.f. We have the first theorem limit.

\bigskip

\begin{theorem}
\label{theorem21}Let $F\in\Gamma,$ \ $l$ be fixed and $k$ satisfy
\end{theorem}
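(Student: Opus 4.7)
The plan is to establish joint asymptotic normality of a suitable normalization of $(T_n(1),\ldots,T_n(p))$ toward the finite-dimensional distributions of the extremal Gaussian process $\{I(r)\}$ introduced just above, with an extra $e(p)Z$ correction in the Gumbel case (as in Definition 2.5). The starting point is the uniform representation~(\ref{alpha15}): every spacing $Y_{n-i+1,n}-Y_{n-i,n}$ can be written as $G^{-1}(1-U_{i,n})-G^{-1}(1-U_{i+1,n})$, which separates the deterministic tail behaviour of $G^{-1}$ from the universal statistical behaviour of uniform order statistics.

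Second, I would linearise the spacings near the threshold $x_n=G^{-1}(1-k/n)$. For $F\in\Gamma$, the appropriate Karamata / de Haan representation of $G^{-1}$ gives, uniformly for $l+1\le i\le k$,
\begin{equation*}
Y_{n-i+1,n}-Y_{n-i,n} \;=\; \frac{U_{i+1,n}-U_{i,n}}{i/n}\cdot c_n\bigl(1+o_{\mathbb{P}}(1)\bigr),
\end{equation*}
where $c_n$ is the slowly varying (or constant) normalising factor attached to the appropriate domain. Using R\'enyi's representation, the scaled spacings $i(Y_{n-i+1,n}-Y_{n-i,n})/c_n$ behave asymptotically like iid unit exponentials $E_i$. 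Substituting this into (\ref{alpha7a}) and centering by $\tau_p(x_n,z_n)$ kills the leading-order mean, because $\tau_p$ is defined exactly as $n/k$ times the iterated-integral $m_p$ that arises from $\mathbb{E}$ of the sum-product.

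Third, and this is the main obstacle, one must identify the limit covariance $\sigma(r,\rho)$. For $1\le r\le \rho$, the cross-product $T_n(r)T_n(\rho)$ expands, via the two partition sums $P(r,h)$ and $P(\rho,h')$, into a sum of ordered products of powers of (nearly) iid exponential increments. After reducing to centred variables and applying the Wick-type moment formula for exponentials, the surviving terms are indexed by matchings between the two partitions. I expect the precise bookkeeping of these matchings to recover exactly the recursions of Definitions \ref{def21}--\ref{def23}: the clog rules for $\beta(v,r)$ come out of the way a matching can either continue within one partition (adding $v$) or merge two free slots (producing the $u+v$ cell), while the hybrid clog for $\mu_\tau(0,v,\delta)$ and $\mu_\tau(1,v,\delta)$ reflects the presence of $\rho-r=\tau$ unmatched factors on the longer side. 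Summation over the admissible depths then yields $\sigma^2(r)=C_1(r)a(r)$ with $a(r)=2\sum_j \beta(1,j)a(r-j)$ and $\sigma(r,\rho)=C_2(r)\sum_j \mu_{\rho-r}(1,1,j)\,a(r-j)$. The detailed combinatorial identities underlying this matching are deferred to Section 3 of the paper.

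Finally, finite-dimensional convergence follows by the Cram\'er--Wold device: for any fixed vector $(c_1,\ldots,c_p)$, the linear combination $\sqrt{k}\sum_{r=1}^p c_r(T_n(r)-\tau_r(x_n,z_n))$ is, up to $o_{\mathbb{P}}(1)$, a sum of $k-l$ bounded, row-wise iid contributions indexed by $i=l+1,\ldots,k$, to which a Lindeberg triangular-array CLT applies with asymptotic variance $\sum_{r,\rho} c_r c_\rho\,\sigma(r,\rho)$. In the cases $F\in D(\varphi_\gamma)$ or $D(\psi_\gamma)$, the random centering $\tilde x_n$ and $x_n$ differ only through a factor that is absorbed in the normalising function $C_1, C_2$; in the Gumbel case $D(\Lambda)$, the mismatch contributes a single Gaussian $Z$ (via the $\sqrt{k}(U_{k+1,n}-k/n)$ fluctuation) correlated with every coordinate, producing exactly the reduced extremal process $\{I(p)+e(p)Z\}$ of Definition 2.5.
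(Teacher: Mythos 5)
Your proposal has two genuine gaps, both at the heart of the theorem. First, the reduction of the spacings to (nearly) i.i.d.\ unit exponentials after a single normalisation $c_n$ is false in the Weibull domain: for $F\in D(\psi_\gamma)$ the derivative of $G^{-1}(1-u)$ is regularly varying of index $1/\gamma-1\neq -1$, so the scale of the spacing $Y_{n-i+1,n}-Y_{n-i,n}$ depends on $i/n$ in an essential way. If your uniform linearisation were correct, the limiting covariance would be domain-independent and one would get $C_1\equiv C_2\equiv 1$; the theorem's constants $C_1(r)=\prod_{j=1}^r(\gamma+j)/(\gamma+r+j)$ show precisely that this cannot happen for $\gamma<\infty$. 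The paper avoids this by never linearising the spacings: it writes $T_n(p)$ exactly as the iterated integral (\ref{alpha61}) of the uniform empirical d.f.\ $U_n(1-G(\cdot))$, and the domain-dependent constants come from the asymptotics of the functions $m_p(x_n,z_n)$ (Lemma \ref{lemma42}), not from an exponential approximation. Second, your CLT step asserts that the centred linear combination is, up to $o_{\mathbb{P}}(1)$, a sum of $k-l$ row-wise i.i.d.\ bounded terms; for $p\geq 2$ the statistic is a nested multiple sum of products of spacings with random thresholds $\widetilde{x}_n,\widetilde{z}_n$, and no projection or martingale argument is supplied to justify such a reduction. In the paper Gaussianity is instead immediate: on the Cs\"org\H{o}--Cs\"org\H{o}--Horv\'ath--Mason space one replaces $U_n(1-G(t))$ by the Brownian bridge $B_n(1-G(t))$, the main term $W_n(p)$ of Lemma \ref{lemma41} is a multiple Riemann integral of a Gaussian process over deterministic limits (hence its f.d.d.'s are Gaussian), and Lemma \ref{lemma42} shows the remaining terms are $o_{\mathbb{P}}(\tau_p(x_n))$, giving Lemma \ref{lemma43}.

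Your third step, the identification of $\sigma^2(r)$ and $\sigma(r,\rho)$, is only conjectural as written (``I expect the precise bookkeeping \dots''), and the mechanism you invoke (a Wick/matching formula for exponential moments) is not the one that actually produces the $\beta$ and $\mu$ numbers. In the paper these arise from iterated integrals of the bridge covariance kernel $h(s,t)$: repeated integration by parts generates the recursions of Lemmas \ref{lemma31}--\ref{lemma33}, and combining them with the induction (\ref{alpha78})--(\ref{alpha80}) and Lemma \ref{lemma42} yields $\sigma^2(r)=2C_1(r)\sum_{j}a(r-j)\beta(1,j)$ and the analogous covariance formula. Finally, you misplace the role of the correction $e(p)Z$: Theorem \ref{theorem21} uses the \emph{random} centering $\tau_p(\widetilde{x}_n)$ and no $Z$ term appears in any domain; the reduced process $I(p)+e(p)Z$ belongs to Theorem \ref{theorem22} with deterministic centering $\tau_p(x_n)$ under (RC), where the contribution $-e(p)B_n(k/n)$ of Lemma \ref{lemma44} is present in \emph{all} domains (with $e(p)=(\gamma+p)/\gamma$, equal to $1$ for $D(\Lambda)\cup D(\varphi)$), not only in the Gumbel case and not ``absorbed'' into $C_1,C_2$ elsewhere.
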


\begin{equation*}
(K)\text{ \ \ }0<k=k(n)\rightarrow+\infty,\text{ \ \ \ }k(n)/n\rightarrow 0%
\text{ \ }an\text{ }n\text{ }\rightarrow+\infty,
\end{equation*}
then $\left\{ k^{1/2}\left( T_{n}(p)-\tau_{p}(\widetilde{x}_{n})\right)
/\tau_{p}(x_{n}),1\leq p<+\infty\right\} ,$ converges in distribution to the
extremal gaussian process process in the canonical topology of $\mathbb{N}%
^{\infty}$ with

\begin{equation}
C_{1}(r)=\overset{r}{\underset{j=1}{\Pi}}\overset{}{\left\{ \frac{\gamma +j}{%
\gamma+r+j}\right\} ,r\geq1}  \label{alpha19}
\end{equation}
and

\begin{equation}
C_{2}(r,\rho)=\overset{j=r}{\underset{j=1}{\Pi}}\overset{}{\left\{ \frac{%
\gamma+j}{\gamma+\rho+j}\right\} ,r\geq1,\rho\geq1,r\neq\rho,}
\label{alpha20}
\end{equation}
for $0<\gamma\leq+\infty.$

Before giving the next theorem, recall that F$\in \Gamma $ may be
represented by constants $\ c$ and $d$ and by functions $f(u)$ and $b(u),$ $%
0<u<1,$ with $b(u)$ and $f(u)$ tending to zero as u tends to zero, through 
\begin{equation}
G^{-1}(1-u)=\log c-(\log u)/\gamma +\int_{u}^{1}b(t)t^{-1}dt,\text{ \ \ \ }%
0<u<1,\text{ \ \ \ }  \label{alpha21}
\end{equation}
\ For $F\in D(\varphi _{\gamma }),$ 
\begin{equation}
y_{0}-G^{-1}(i-u)=c(1+f(u))\text{ }u^{1/2}\exp
(\int_{u}^{1}b(t)t^{-1}dt),0<u<1  \label{alpha22}
\end{equation}
and for $F\in D(\psi _{\gamma }),$and

\begin{equation}
G^{-1}(1-u)=d-s(u)+\int_{u}^{1}s(t)t^{-1}dt),0<u<1,  \label{alpha22a}
\end{equation}
for $F\in D(\Lambda ),$ where $s(u)=c(1+f(u))\exp
(\int_{u}^{1}b(t)t^{-1}dt),0<u<1.$ $($\ref{alpha21}) and (\ref{alpha22}) are
the Karamata representations while (\ref{alpha22a}) is the de
Haan-Mason-Deheuvels one.

Replacing $\tau _{p}(\widetilde{x}_{n})$ by the non-randon sequence $\tau
_{p}(x_{n})$ \ requires regularity conditions on f(.). Such
characterizations for $p=1,2$ \ are given in L\^{o} (1991a). In fact, they
will hold again. But since we are only interested in putting reduced
processes to the fore, we can only use the simplest condition, that is : f'
has a derivative in some neighborhood of zero and 
\begin{equation}
\lim_{u\longrightarrow 0}uf^{\prime }u()=0.  \tag{RC}
\end{equation}

We are now able to formulate our second theorem.

\begin{theorem}
\label{theorem22}Let $F\in \Gamma $, $l$ be fixed, $k$ satisfy $(K)$ and $%
(RC)$ hold, then 
\begin{equation*}
\left\{ k^{1/2}(T_{n}(p)-\tau _{p}(x_{n}))/\tau _{p}(x_{n}),\text{ \ \ }%
1\leq p<+\infty \right\}
\end{equation*}
converges in distribution to the reduced form of the extremal process of
Theorem \ref{theorem21} in $\mathbb{N}^{\infty }$ induced with its canonical
topology with $e(p)=(\gamma +p)/\gamma ,0<\gamma \leq +\infty .$
\end{theorem}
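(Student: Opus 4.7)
The plan is to bootstrap off Theorem~\ref{theorem21} by writing
\begin{equation*}
\frac{k^{1/2}(T_{n}(p)-\tau _{p}(x_{n}))}{\tau _{p}(x_{n})}\;=\;\frac{k^{1/2}(T_{n}(p)-\tau _{p}(\widetilde{x}_{n}))}{\tau _{p}(x_{n})}\;+\;\frac{k^{1/2}(\tau _{p}(\widetilde{x}_{n})-\tau _{p}(x_{n}))}{\tau _{p}(x_{n})}.
\end{equation*}
By Theorem~\ref{theorem21} the first summand converges to $I(p)$, so the task reduces to identifying the limit of the second summand as $e(p)Z$ for a suitable standard normal $Z$, \emph{jointly} with the $I(p)$'s, and to verifying the covariance relation $\mathbb{E}(I(p)Z)=-1$ demanded by the reduced extremal process.

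The second summand is handled by a first-order Taylor expansion. From the definitions (\ref{alpha13})--(\ref{alpha14}) we read off $m_{p}'(x)=-m_{p-1}(x)$ for $p\geq 1$ (with the convention $m_{0}(x)=1-G(x)$), hence $\tau _{p}'(x)=-\tau _{p-1}(x)$ and
\begin{equation*}
\tau _{p}(\widetilde{x}_{n})-\tau _{p}(x_{n})=-\tau _{p-1}(x_{n})(\widetilde{x}_{n}-x_{n})+\tfrac{1}{2}\tau _{p}''(\xi _{n})(\widetilde{x}_{n}-x_{n})^{2}.
\end{equation*}
Dividing by $\tau _{p}(x_{n})$ leaves the leading factor $-m_{p-1}(x_{n})/m_{p}(x_{n})$, and this ratio is asymptotically evaluated in each of the three extremal domains $D(\varphi _{\gamma })$, $D(\psi _{\gamma })$ and $D(\Lambda )$ from the Karamata-type and de~Haan--Mason--Deheuvels representations (\ref{alpha21})--(\ref{alpha22a}). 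Standard Potter-type bounds give the uniformity of the convergence, and combined with the scaling for $\widetilde{x}_{n}-x_{n}$ obtained below they produce the multiplicative constant $e(p)=(\gamma +p)/\gamma $, with the Gumbel case $\gamma =+\infty $ treated as a limit.

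For $\sqrt{k}(\widetilde{x}_{n}-x_{n})$, I expand $u\mapsto G^{-1}(1-u)$ around $u=k/n$. Under~(\ref{alpha21}) and the regularity condition (RC), this map is differentiable in a neighbourhood of $0$ with derivative asymptotic to $-1/(\gamma u)$ (respectively $-s(u)/u$ in the de~Haan case $F\in D(\Lambda )$). Combined with the classical limit $\sqrt{k}(nU_{k+1,n}/k-1)\rightarrow Z\sim N(0,1)$ this yields $\sqrt{k}(\widetilde{x}_{n}-x_{n})\rightarrow -Z/\gamma $, with the obvious modification when $\gamma =+\infty $. The quadratic Taylor remainder is $O_{\mathbb{P}}(1/\sqrt{k})$ and therefore vanishes in the limit, so the normalised correction converges to $e(p)Z$ as claimed.

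The final and most delicate step is the \emph{joint} convergence. The variable $Z$ arises from the single quantity $U_{k+1,n}$, which simultaneously enters the proof of Theorem~\ref{theorem21} via the random centering $\tau _{p}(\widetilde{x}_{n})$. By tracing the contribution of $U_{k+1,n}$ through that proof one establishes the coupling $(k^{1/2}(T_{n}(p)-\tau _{p}(\widetilde{x}_{n}))/\tau _{p}(x_{n}),\,\sqrt{k}(nU_{k+1,n}/k-1))\rightarrow (I(p),Z)$ and, in particular, the identity $\mathbb{E}(I(p)Z)=-1$. Since the \emph{same} $Z$ appears in every coordinate $p$, the finite-dimensional joint convergence of the corrected process is immediate once this two-dimensional joint convergence is in hand. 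Reopening the proof of Theorem~\ref{theorem21} to extract this coupling is the main obstacle; once that is secured, the additive decomposition above together with Slutsky's theorem delivers the conclusion.
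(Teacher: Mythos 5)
Your decomposition is exactly the one the paper uses, and your delta-method treatment of $\tau_{p}(\widetilde{x}_{n})-\tau_{p}(x_{n})$ parallels the paper's Lemma \ref{lemma44}. But the step you yourself flag as ``the main obstacle'' --- the \emph{joint} convergence of the Theorem \ref{theorem21} term with the centering correction, and in particular the identity $\mathbb{E}(I(p)Z)=-1$ --- is precisely where the substance of the proof lies, and you only assert that it follows ``by tracing the contribution of $U_{k+1,n}$'' through Theorem \ref{theorem21}. As you note, Slutsky gives nothing here: both summands are random in the limit, so without an explicit coupling and an explicit cross-covariance computation the statement about the reduced process is not proved. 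The paper closes this gap by working throughout on the Cs\"{o}rg\"{o}--Cs\"{o}rg\"{o}--Horv\'{a}th--Mason space, where the leading term is the Brownian-bridge integral $W_{n}(p)/\tau_{p}(x_{n})$ and the centering correction is reduced (via (\ref{alpha67})--(\ref{alpha69}), i.e.\ Lemma 3.6 of the 1991a report plus Lemma \ref{lemma42}) to $-e(p)B_{n}(k/n)+o_{p}(1)$ with the \emph{same} bridge $B_{n}$; then the cross-covariance is computed directly, $\mathbb{E}\bigl(W_{n}(r)B_{n}(k/n)\bigr)=(1+r_{n})\tfrac{n}{k}m_{r}(x_{n},z_{n})$ as in (\ref{alpha93}), which after division by $\tau_{r}(x_{n})=\tfrac{n}{k}m_{r}(x_{n},z_{n})$ gives the limit $1$, hence $\mathbb{E}(I(r)Z)=-1$ and the reduced covariances (\ref{alpha94})--(\ref{alpha95}). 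Your proposal contains no analogue of this computation, so the covariance structure of the limit --- the whole point of the ``reduced'' process --- is left unverified.

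A secondary, fixable inaccuracy: the claim $\sqrt{k}(\widetilde{x}_{n}-x_{n})\rightarrow -Z/\gamma$ is not correct in general. In the Weibull and Gumbel domains $\widetilde{x}_{n}-x_{n}$ must be normalised by a scale factor ($R_{1}(x_{n})$, respectively $s(k/n)$), as in (\ref{alpha69}); only the product of that scale with the ratio $m_{p-1}(x_{n})/m_{p}(x_{n})$ (Lemma \ref{lemma42}) is what produces $e(p)=(\gamma+p)/\gamma$, and the $\gamma$ entering $e(p)$ is the Weibull index (with $e(p)=1$ for $F\in D(\Lambda)\cup D(\varphi)$), not the Fr\'{e}chet exponent appearing in your quantile expansion. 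You gesture at the domain-by-domain modifications, so this can be repaired, but as written the scaling statement and the role of $\gamma$ are conflated.
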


\bigskip

\begin{remark}
Following the notation in L\^{o} (1990b), \ the case $0<\gamma <+\infty $
corresponds to $F\in D(\psi _{\gamma })$, $\gamma =+\infty $ means $F\in
D(\Lambda )\cup D(\varphi ).$ Unless the contrary is specified, the
corresponding values for the functions in $\gamma $ are obtained, in the
second case, by letting $\gamma \rightarrow +\infty $ \ in the first case.
For example, for $F\in D(\Lambda )\cup D(\varphi ),$ $e(p)=\lim_{\gamma
\rightarrow +\infty }(\gamma +p)/\gamma =1$.
\end{remark}

\bigskip

\begin{remark}
These two theorems prove the existence of the gaussian processes introduced
in Definitions \ref{def21} and \ref{def22} and show how they may be observed
and simulated.
\end{remark}

\bigskip

\section{Technical lemmas concerning numbers generation}

Recall that for all $j\geq 1,$ \ \ \ $m_{j}(x,y_{o})<+\infty $ for $F\in
\Gamma ,$%
\begin{equation}
h_{j}^{v}(\gamma
)=\int_{p_{0}}^{z_{n}}dp_{1}\int_{q_{0}}^{z_{n}}dp_{1}%
\int_{q1}^{z_{n}}dp_{2}\int_{q_{1}}^{z_{n}}...\int_{p_{p_{r-2}}}^{z_{n}}%
\frac{(q_{r-1}-p_{r-1})^{v}}{v!}\text{ }m_{j}(q_{r-1},z_{n})\text{ }dq_{r-1},
\notag
\end{equation}
$where$ $v\in \mathbb{N},$ $j\in \mathbb{N}^{\ast },r\geq 2,$\ $q_{0}=p_{1},$
\ \ $p_{0}=x_{n}$ $;$

\begin{equation}
\gamma_{j}^{v}(0,\delta)=\int_{x_{n}}^{z_{n}}dq_{1}%
\int_{q_{1}}^{z_{n}}dp_{2}...\int_{q_{\tau-\delta}}^{z_{n}}\frac{%
(q_{r-\delta+1}-x_{n})^{v}}{v!}\text{ }m_{j}(q_{\tau-\delta+1},z_{n})\text{ }%
dq_{\tau r-\delta+1};  \label{alpha23}
\end{equation}
and

\begin{equation}
\gamma_{j}^{v}(1,\delta)=\int_{x_{n}}^{z_{n}}dq_{1}%
\int_{q_{1}}^{z_{n}}dp_{2}...\int_{q_{\tau-1}}^{z_{n}}dq_{\tau}%
\int_{x_{n}}^{q_{\tau}}dp_{1}\int_{q_{\tau}}^{z_{n}}dq_{\tau+1}%
\int_{p_{1}}^{z_{n}}...\int_{p_{\delta-2}}^{q_{\tau+\delta-2}}dq_{\delta-1}
\label{alpha24}
\end{equation}

\begin{equation}
\int_{p_{\tau +\delta -2}}^{z_{n}}\left\{ \frac{(q_{\tau -\delta
-1}-p_{\delta -1})}{v!}\text{ }m_{j}(q_{\tau +\delta -1},z_{n})\right\}
dq_{\tau +\delta -1^{\prime }}  \label{alpha25}
\end{equation}
where $\nu \in \mathbb{N}^{{}},$ $\tau \in \mathbb{N}^{\ast },$ $\delta
=2,3,...$

\qquad The three class of special numbers given here appear when computing
these integrals.

\subsection{COMPUTATION OF h$_{j}^{\protect\nu}(r)$}

\begin{lemma}
\label{lemma31}For all $j\geq 1,$ \ \ \ $r>2,\nu \geq 0,$ the ratios 
\begin{equation}
\beta (\nu ,r,j)=h_{j}^{\nu }(r)/m_{j+\nu +2(r-1)}(x_{n},z_{n})
\end{equation}
are positive integers and depend only on $\left( \nu ,r\right) $ \ so that
\end{lemma}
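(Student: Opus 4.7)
The plan is to proceed by induction on $r$, treating $\nu$ as a secondary parameter, and to show that the identities driving the induction mirror exactly the clog rules of Definition~\ref{def21}.

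Two elementary building blocks will do most of the work. First, unpacking the definition of $m_p$ gives at once
\[
\int_{y}^{z_{n}} m_{j}(t,z_{n})\,dt = m_{j+1}(y,z_{n}),
\]
so each extra layer of outer integration simply raises the subscript of $m_{j}$ by one; in particular it introduces no $j$-dependent factor. Second, for $\nu\in\mathbb{N}$,
\[
\int_{a}^{b}\frac{(b-t)^{\nu}}{\nu!}\,dt = \frac{(b-a)^{\nu+1}}{(\nu+1)!},
\]
which is the single-step integration-by-parts identity that will shift the pair $(\nu,r)$ in the recursion.

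The base case $r=2$ is a direct computation: substituting into the definition and applying the two identities above in the right order collapses $h_{j}^{\nu}(2)$ to $m_{j+\nu+2}(x_{n},z_{n})$. Hence $\beta(\nu,2,j)=1=\beta(\nu,2)$, matching item~ii of Definition~\ref{def21}. For the inductive step, I would assume the claim for all $(\nu',r')$ with $r'<r$ and, at the current $r$, for all $\nu'<\nu$. Then I would either (a) integrate by parts on the innermost polynomial factor $(q_{r-1}-p_{r-1})^{\nu}/\nu!$, which shifts $\nu\mapsto\nu-1$ and leaves a boundary term that telescopes via the first identity into $h_{j}^{\nu'}(r')$ with smaller $r'$, or (b) swap the two innermost integrations by Fubini, turning the integrand into one already handled by the induction. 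Combining the two contributions yields
\[
\beta(\nu,r,j)=\beta(\nu+1,r-1,j)+\beta(\nu-1,r,j),\qquad \nu\ge 2,
\]
which is precisely clog rule~v. The edge cases $\nu=0$ and $\nu=1$ correspond to one of the two terms on the right either vanishing (because the boundary substitution kills it) or collapsing to the lower row of the clog diagram, and they reproduce items~iii and~iv of the definition. Throughout, $j$ appears only through the innermost factor $m_{j}$, while every manipulation touches only outer integration variables; this is why $\beta(\nu,r,j)$ inherits no $j$-dependence, giving $\beta(\nu,r,j)=\beta(\nu,r)$, and why every such integer is positive (each $h_{j}^{\nu}(r)$ is an integral of a nonnegative function which is strictly positive on a set of positive measure).

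The main obstacle is bookkeeping rather than insight: the nested-integral expression for $h_{j}^{\nu}(r)$ is heavy and the variables $p_{i},q_{i}$ are not easy to track, so one has to be meticulous when verifying that each integration by parts or Fubini swap lands on exactly one of the $h$-functionals on the right-hand side of the clog rule, with the subscript of $m_{j}$ shifted by the predicted amount. A secondary subtlety is maintaining the $j$-independence of the coefficient through the induction: because the inductive hypothesis already provides this, one only needs to ensure that the recursion step does not introduce any weight depending on $j$, which is guaranteed by the fact that the two building-block identities above do not.
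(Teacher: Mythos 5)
Your proposal is correct and follows essentially the same route as the paper: induction on $r$ with the base case $r=2$ collapsing $h_{j}^{\nu}(2)$ to $m_{j+\nu+2}(x_{n},z_{n})$, then repeated integration by parts via $m_{j}(q,z_{n})=-\,dm_{j+1}(q,z_{n})/dq$ to produce the two-term recursion $h_{j}^{\nu}(r)=h_{j+1}^{\nu+1}(r-1)+h_{j+1}^{\nu-1}(r)$ and its $\nu=0,1$ edge cases, which are exactly the clog rules of Definition \ref{def21}, with $j$-independence propagated through the induction because the recursion carries unit coefficients and the index of the normalizing $m$ is conserved.
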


\begin{equation}
\forall\nu\geq0,\text{ \ }\forall\geq2,\text{ \ }\beta(\nu,r,j)\equiv\beta
(\nu,r).  \label{alpha27}
\end{equation}

\begin{proof}
First put $r=2.$ We have

\begin{equation}
h_{j}^{\nu }(2)=\int_{x_{n}}^{z_{n}}dp_{1}\int_{p_{1}}^{z_{n}}\frac{%
(q_{1}-p_{1})^{\nu }}{\nu !}m_{j}(q_{1},z_{n})dq_{1},\text{ }\nu \geq
1,j\geq 1.  \label{alpha28}
\end{equation}
By remarking that,

\bigskip 
\begin{equation}
m_{j}(q_{1},z_{n})=-dm_{j+1}(q_{1},z_{n})\text{ }/dq_{1},  \label{alpha29}
\end{equation}
and by integrating by parts, we arrive at

\begin{equation}
h_{j}^{\nu }(2)=h_{j+1}^{\nu -1}(2),\text{ \ \ for }\nu \geq 1,\text{ }j\geq
1.  \label{alpha30}
\end{equation}
But, 
\begin{equation}
h_{j}^{0}(2)=m_{j+2}(x_{n},z_{n}),\text{ for }j\geq 1,  \label{alpha31}
\end{equation}
so that 
\begin{equation}
h_{j}^{0}(2)\text{ }/\text{ }m_{j+0+2(2-1)}(x_{n},z_{n})=\beta (0,2)=1.
\label{alpha32}
\end{equation}
By repeating (\ref{alpha30}) until its right member becomes $h_{\bullet
}^{0}(2)$ and by using (\ref{alpha31}), we get 
\begin{equation}
\forall (j\geq 2),\forall (\nu \geq 1),h_{j}^{\nu }(2)=m_{j+\nu
+2}(x_{n},z_{n})  \label{alpha32b}
\end{equation}
which proves the statements of the lemma for $r=2,$ that is

\begin{equation}
\forall (j\geq 1),\forall (\nu \geq 1),1=h_{j}^{\nu }(2)/m_{j+\nu
+2(2-1)}(x_{n},z_{n})=\beta (\nu ,2)  \label{alpha33}
\end{equation}
Now, by applying again a change of variables like (\ref{alpha29}) and by
integrating by parts in $h_{j}^{\nu }(\gamma ),$ we obtain

\begin{equation}
\forall (j\geq 1),\text{ \ \ }\forall (r\geq 3),\text{ \ }\forall \nu \geq 1,%
\text{ }h_{j}^{\nu }(r)=h_{j+1}^{\nu +1}(r-1)+h_{j+1}^{\nu -1}(r)
\label{alpha34}
\end{equation}
By assuming that the statements of the lemma hold for $r-1\geq 2,$ that is

\begin{equation}
\forall (j\geq 1),\forall (\nu \geq 1),\text{ }\beta (\nu ,r-1)=h_{j}^{\nu
}(r-1)/m_{j+\nu +2(r-2)}(x_{n},z_{n}),  \label{alpha35}
\end{equation}
and by repeating (\ref{alpha34}) until the second term of its right member
becomes an $h_{\bullet }^{0}(\gamma )$ term, we show that the expression $%
h_{j}^{\nu }(r)/m_{j+\nu +(r-1)}(x_{n},z_{n})=\beta (\nu ,\gamma )$ does not
depend on $\ j$ \ for any $\nu \geq 1$ and that

\begin{equation}
\forall (r\geq 1),(\forall \nu \geq 1),\text{ }\beta (\nu ,r-1)=\overset{%
h=\nu -1}{\underset{h=-1}{\sum }}\beta (\nu -h,r-1).  \label{alpha36}
\end{equation}
This proves that $\beta (\nu ,r)$ is also integer for $\nu \geq 0$ and $%
t\geq 3.$ This together \ with (\ref{alpha33}) proves lemma by induction. It
is easy to derive from this latter that 
\begin{equation}
\forall (r\geq 3),\forall (\nu \geq 2),\text{ }\beta (\nu ,r)=\beta (\nu
+1,r-1)+\beta (\nu -1,r).  \label{alpha36a}
\end{equation}
Further, for $r\geq 3,$ $\ \ j\geq 1,$ \ we obviously have

\begin{equation}
h_{j}^{0}(\gamma )=h_{j+1}^{1}(r-1)  \label{alpha37}
\end{equation}
so that

\begin{equation}
\forall (r\geq 3),\text{ }\forall j\geq 1,\text{ }\beta (0,r)=\beta (1,r-1)
\label{alpha38}
\end{equation}
By applying (\ref{alpha29}) and by integrating by parts, we arrive at

\begin{equation}
\forall (r\geq 3),\text{ }h_{j}^{1}(r)=h_{j+1}^{2}((r-1)+h_{j+1}^{0}(\gamma
),  \label{alpha39}
\end{equation}
which, combined with (\ref{alpha37}) and (\ref{alpha38}), implies

\begin{equation}
\forall (r\geq 3),\text{ }\beta (1,r)=\beta (2,r-1)+\beta (1,r-1)
\label{alpha40}
\end{equation}
Finally, set by convention

\begin{equation}
\forall (\nu \geq 0),\text{ }\beta (\nu ,1)=1  \label{alpha41}
\end{equation}
and see that (\ref{alpha36a}), (\ref{alpha38}), (\ref{alpha40}) and (\ref
{alpha41}) together show that the beta nimbers are generated by the ratios

\begin{equation}
h_{j}^{\nu }(r)/m_{j+\nu +2(r-1)}(x_{n},z_{n}),\text{ \ }for\text{ }j\geq 1,%
\text{ }\nu \geq 0,r\geq 1.  \label{alpha42}
\end{equation}
\end{proof}

\subsection{\protect\bigskip COMPUTATION OF $\protect\gamma_{J}^{\protect\nu%
}(0,\protect\delta)$}

\begin{lemma}
\label{lemma32}For all $\delta ,$ $1\leq \delta \leq \tau ,\nu \geq 0,$ the
ratios
\end{lemma}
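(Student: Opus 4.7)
The plan is to mirror the argument of Lemma \ref{lemma31}. I want to show that the ratios
$$\mu_\tau(0,\nu,\delta,j) := \gamma_j^\nu(0,\delta)/m_{j+\nu+\tau-\delta+1}(x_n,z_n)$$
are positive integers independent of $j$, and that they coincide with the type II numbers $\mu_\tau(0,\nu,\delta)$ introduced in Definition \ref{def22}. The normalization is dictated by dimension counting: each of the $\tau-\delta+1$ iterated integrations of $m_j$ against $dq_i$ raises the subscript by one (via $m_j(q,z_n) = -dm_{j+1}(q,z_n)/dq$), and each integration by parts against the polynomial factor $(q_{\tau-\delta+1}-x_n)^\nu/\nu!$ absorbs one unit of the polynomial degree into the $m$-index.

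The first step is to dispose of the two base cases, which must match points (i) and (ii) of Definition \ref{def22}. When $\delta=\tau$, the iterated integral collapses to $\int_{x_n}^{z_n}\frac{(q_1-x_n)^\nu}{\nu!}\,m_j(q_1,z_n)\,dq_1$; integrating by parts $\nu+1$ times (exactly as in formulas (\ref{alpha29})--(\ref{alpha31}) of the proof of Lemma \ref{lemma31}) delivers $m_{j+\nu+1}(x_n,z_n)$, giving ratio $1=\mu_\tau(0,\nu,\tau)$. When $\nu=0$ and $1\le\delta<\tau$, the polynomial factor is absent, and repeated application of (\ref{alpha29}) telescopes the nested integral to $m_{j+\tau-\delta+1}(x_n,z_n)$, giving ratio $1=\mu_\tau(0,0,\delta)$.

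The second step is to derive the clog rule (iii) of Definition \ref{def22}. For $\nu\ge 1$ and $1\le\delta<\tau$, I would integrate by parts in the innermost integral $\int_{q_{\tau-\delta}}^{z_n}\frac{(q_{\tau-\delta+1}-x_n)^\nu}{\nu!}m_j(q_{\tau-\delta+1},z_n)\,dq_{\tau-\delta+1}$, using (\ref{alpha29}) to take the antiderivative of the $m_j$ factor. The boundary term at $z_n$ vanishes (since $m_{j+1}(z_n,z_n)=0$), and the boundary at $q_{\tau-\delta}$ combines with the remaining outer integrals to yield, after a reindexing of dummy variables, a copy of $\gamma_{j+1}^\nu(0,\delta+1)$; meanwhile the derivative of $(q_{\tau-\delta+1}-x_n)^\nu/\nu!$ produces $\gamma_{j+1}^{\nu-1}(0,\delta)$. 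Dividing by $m_{j+1+\nu+\tau-\delta}(x_n,z_n)$ and invoking the inductive hypothesis that neither summand depends on $j$ gives
$$\mu_\tau(0,\nu,\delta) = \mu_\tau(0,\nu,\delta+1) + \mu_\tau(0,\nu-1,\delta),$$
which is exactly clog rule (iii). Induction on $\nu$ (upward) with $\delta$ decreasing from $\tau$ then closes the argument.

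The main obstacle will be the bookkeeping in the integration-by-parts step: one must verify that the boundary contribution at $q_{\tau-\delta}$ really reproduces $\gamma_{j+1}^\nu(0,\delta+1)$ with the correct $m$-index shift, and that both resulting terms share the same normalizing denominator $m_{j+1+\nu+\tau-\delta}(x_n,z_n)$ so the recursion makes sense at the level of ratios. This requires keeping careful track of the nested limits of integration and of the shift $j \mapsto j+1$ induced by each step, but is otherwise routine once the pattern of Lemma \ref{lemma31} has been set up.
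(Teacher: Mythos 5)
Your proposal is correct and follows essentially the same route as the paper: the two base cases $\gamma_j^0(0,\delta)=m_{j+\tau-\delta+1}(x_n,z_n)$ and $\mu_\tau(0,\nu,\tau)=1$, then the integration-by-parts identity $\gamma_j^\nu(0,\delta)=\gamma_{j+1}^\nu(0,\delta+1)+\gamma_{j+1}^{\nu-1}(0,\delta)$ yielding the clog rule, closed by induction (the paper inducts on $\tau-\delta$ rather than on $\nu$, but the double recursion makes the two orderings interchangeable). Your consistency check that both terms share the normalizer $m_{j+1+\nu+\tau-\delta}(x_n,z_n)$ is exactly the bookkeeping implicit in the paper's argument.
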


\begin{equation}
\gamma _{j}^{\nu }(0,\tau )/m_{j+\nu +\tau -\delta +1}(x_{n},z_{n})=\mu
_{\tau }(0,\nu ,\delta ,j)\text{ }  \label{alpha43}
\end{equation}
are integers depending only on $\tau ,\nu $ and $\delta $ so that

\begin{equation}
\forall \tau \geq 1,\forall \delta ,0<\delta <\tau ,\forall \nu \geq
0,\forall j\geq 1,\text{ }\mu _{\tau }(0,\nu ,\delta ,j)\equiv \mu _{\tau
}(0,\nu ,\delta )\text{ }  \label{alpha44}
\end{equation}

\begin{proof}
A (\ref{alpha29}) - like change of variables yields

\begin{equation}
\gamma _{j}^{\nu }(0,\tau )=\gamma _{j+1}^{\nu -1}(0,\tau ),\text{ }j\geq 1.
\label{alpha45}
\end{equation}
It is easily checked that

\begin{equation}
\gamma _{j}^{0}(0,\delta )=m_{j+\tau -\delta +1}(x_{n},z_{n}),
\label{alpha46}
\end{equation}
which gives

\begin{equation}
\forall \text{ }\delta ,1\leq \delta \leq \tau ,\text{ }\mu _{\tau
}(0,0,\delta )=1  \label{alpha47}
\end{equation}
Combining this with (\ref{alpha45}), we get

\begin{equation}
\forall \nu \geq 0,\mu _{\tau }(0,\nu ,\tau )=1  \label{alpha48}
\end{equation}
By using also a (\ref{alpha29})-like change of variables, we have

\begin{equation}
\forall \text{ }j\geq 1,\forall (\nu \geq 0),\text{ }\forall (1\leq \delta
\leq \tau ),\text{ \ }\gamma _{j}^{\nu }(0,\delta )=\gamma _{j+1}^{\nu
}(0,\delta +1)+\gamma _{j+1}^{\nu -1}(0,\delta )  \label{alpha49}
\end{equation}
One proves this lemma by induction over $\tau -r,r=0,1,...,\tau -1$ through (%
\ref{alpha49}) after having taken into account the initial column given in (%
\ref{alpha48}). This induction yields

\begin{equation}
\forall \nu \geq 0,\forall 1\leq \delta <\tau ,\mu _{\tau }(0,\nu ,\delta )=%
\overset{k=\nu }{\underset{k=0}{\sum }},\mu _{\tau }(0,\nu ,\tau ),
\label{alpha50}
\end{equation}
which, in turn, gives the following clog-rule

\begin{equation}
\forall \nu \geq 1,\forall 1\leq \delta <\tau ,\text{ \ }\mu _{\tau }(0,\nu
,\delta )=\mu _{\tau }(0,\nu -1,\delta )+\mu _{\tau }(0,\nu ,\delta +1)
\label{alpha51}
\end{equation}
It is now proved that the $\tau -class$ of numbers are generated by the
integrals $\gamma _{j}^{\nu }(0,\delta ).$
\end{proof}

\subsection{ COMPUTATION OF $\protect\gamma_{j}^{\protect\nu}(0,\protect%
\delta),\protect\delta\geq2.$}

\begin{lemma}
\label{lemma33}Let \ $\tau \in \mathbb{N}^{\ast }$ be fixed. For all $j\geq
1,$ \ $\nu \geq 0$ and $\delta \geq 2,$ the ratios
\end{lemma}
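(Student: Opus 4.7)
The plan is to mimic, step by step, the structure of the proofs of Lemmas \ref{lemma31} and \ref{lemma32}: use the identity $m_{j}(q,z_{n})=-dm_{j+1}(q,z_{n})/dq$ and integration by parts to (a) lower the exponent $\nu$ by one unit at the cost of raising $j$, (b) establish a boundary recursion in $\delta$, and (c) identify the boundary values at $\delta=2$ with the Type~II numbers produced by Lemma \ref{lemma32}. One then reads off Definition~\ref{def23} from the resulting combinatorial relations and concludes by induction.

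First I would work out two auxiliary identities by integration by parts in the innermost variable $q_{\tau+\delta-1}$. The first, obtained by using \eqref{alpha29} and integrating once, should give the $\nu$-lowering rule
\begin{equation*}
\gamma_{j}^{\nu}(1,\delta)=\gamma_{j+1}^{\nu+1}(1,\delta-1)+\gamma_{j+1}^{\nu-1}(1,\delta),\qquad \nu\geq 1,\ \delta\geq 3,
\end{equation*}
which is the exact analogue of \eqref{alpha34} and reproduces clog rule (v) of Definition~\ref{def21} with $\delta$ playing the role of $r$. Exactly as in Lemma~\ref{lemma31}, iterating this relation until the second term becomes a $\gamma^{0}_{\bullet}(1,\cdot)$ will show that the ratio $\gamma_{j}^{\nu}(1,\delta)/m_{?}(x_{n},z_{n})$ does not depend on $j$, so that the eventual integer $\mu_{\tau}(1,\nu,\delta,j)$ collapses to $\mu_{\tau}(1,\nu,\delta)$. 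The correct denominator exponent is forced by dimensional/book-keeping of the $\delta+\tau$ nested integrations and the monomial of degree $\nu$, so it must be $m_{j+\nu+\tau+\delta-1}(x_{n},z_{n})$ (up to an index shift to be fixed by checking the base case).

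For the base cases, the key point is $\delta=2$. When $\delta=2$, the triple of variables $(p_{1},q_{\tau},q_{\tau+1})$ that are inserted at the ``break'' of the nested domain gives, after integrating out $p_{1}$ against the monomial factor $(q_{\tau}-p_{1})^{\nu}/\nu!$, a sum of $\gamma_{\cdot}^{\cdot}(0,\cdot)$-type integrals indexed by $k=0,1,\dots,\nu+1$. Invoking Lemma \ref{lemma32} to replace each of these by $\mu_{\tau}(0,k,1)\,m_{j+\nu+\tau+1}(x_{n},z_{n})$ should produce exactly the identity
\begin{equation*}
\mu_{\tau}(1,\nu,2)=\sum_{k=1}^{\nu+1}\mu_{\tau}(0,k,1),
\end{equation*}
which is precisely \eqref{alpha11}. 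Together with the trivial identifications $\mu_{\tau}(1,\nu,1)=1$ (direct computation) and the conventions stated after Definition~\ref{def23}, this supplies the initial column from which the clog rule (v) can propagate.

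The remaining properties (i), (iii), (iv) from Definition~\ref{def21} are inherited essentially for free: (i) and (iv) are boundary identities of the same kind as \eqref{alpha37}--\eqref{alpha38} and follow by recognizing that $\gamma^{0}_{j}(1,\delta)=\gamma^{1}_{j+1}(1,\delta-1)$; (iii) comes from applying \eqref{alpha29} once with $\nu=1$ as in \eqref{alpha39}. A single induction on $\delta\geq 2$, with the recurrence above as the inductive step and the identification of Type~III with Type~II at $\delta=2$ as the base, then forces $\gamma_{j}^{\nu}(1,\delta)/m_{j+\nu+\tau+\delta-1}(x_{n},z_{n})$ to coincide with $\mu_{\tau}(1,\nu,\delta)$ for all admissible $(\nu,\delta)$.

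The main obstacle I anticipate is not the recurrence step (which is a routine integration by parts copied from Lemma~\ref{lemma31}) but precisely the base-case identification at $\delta=2$: one has to perform the inner integration over $p_{1}$ against $(q_{\tau}-p_{1})^{\nu}/\nu!$ in such a way that the resulting expression matches the sum $\sum_{k=1}^{\nu+1}\mu_{\tau}(0,k,1)$ term by term, so that the bridge between the Type~II and Type~III number families is made rigorous. This is what justifies the otherwise ad hoc looking formula \eqref{alpha11}, and getting the index of summation, the denominator exponent, and the conventions $\mu_{\tau}(1,1,1)=\mu_{\tau}(0,1,1)$ and $\mu_{\tau}(1,1,0)=1$ mutually consistent will be the delicate part of the bookkeeping.
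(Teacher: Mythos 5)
Your proposal is correct and is essentially the paper's own proof: the paper treats $\delta=2$ exactly as you do, using the change of variables \eqref{alpha29} and integration by parts to get $\gamma_{j}^{\nu}(1,2)=\gamma_{j+1}^{\nu+1}(0,1)+\gamma_{j+1}^{\nu-1}(1,2)$, iterating down to $\gamma_{j}^{0}(1,2)=\gamma_{j+1}^{1}(0,1)$ and invoking Lemma \ref{lemma32} to obtain $\mu_{\tau}(1,\nu,2)=\sum_{k=1}^{\nu+1}\mu_{\tau}(0,k,1)$, and then handles $\delta\geq 3$ by declaring that the argument follows the lines of Lemma \ref{lemma31}, i.e.\ precisely your $\nu$-lowering recursion and induction. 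The only bookkeeping slip is your tentative normalizer $m_{j+\nu+\tau+\delta-1}$: each unit increase of $\delta$ adds both a $p$- and a $q$-integration, so the correct index is $j+\nu+\tau+2(\delta-1)$, which (as you anticipate) is forced by your recursion together with the base value $m_{j+\nu+\tau+2}$ at $\delta=2$.
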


\begin{equation}
\gamma_{j}^{\nu}(1,\delta)/m_{j+\nu+\tau+2(\delta-1)}(x_{n},z_{n})=\mu_{\tau
}(1,\nu,\delta,j).  \label{alpha52}
\end{equation}
are integers depending only on $(\nu,\delta)$ so that

\begin{equation}
\forall \tau \in \mathbb{N}^{\ast },\text{ }\forall (j\geq 1),\ \forall (\nu
\geq 0),\text{ }\forall (\delta \geq 2)\text{, \ }\mu _{\tau }(1,\nu ,\delta
,j)\equiv \mu _{\tau }(1,\nu ,\delta )  \label{alpha53}
\end{equation}

\begin{proof}
:\bigskip First, we compute $\gamma _{j}^{\nu }(1,2).$ A (\ref{alpha29}%
)-like change of variables gives

\begin{equation}
\gamma _{j}^{\nu }(1,2)=\gamma _{j+1}^{\nu +1}(0,1)+\gamma _{j+1}^{\nu
-1}(1,2),\text{ \ \ }j\geq 1,\nu \geq 0  \label{alpha54}
\end{equation}
By repeating this latter $\nu $ times, and by using

\begin{equation}
\forall j\geq 1,\text{ \ \ \ }\gamma _{j}^{0}(1,2)=\gamma _{j+1}^{1}(0,1),
\label{alpha55}
\end{equation}
and, finally by applying Lemma \ref{lemma32}, we get

\begin{equation}
\text{\ }\gamma _{j}^{\nu }(1,2)=(\overset{k=\nu +1}{\underset{k=1}{\sum }}%
\mu _{\tau }\left( 0,k,1\right) \text{ }m_{j+\nu +\tau +2}\left(
x_{n},z_{n}\right) .\text{ \ \ }  \label{alpha56}
\end{equation}
This proves that the statements of \ the lemma hold for $\delta =2$ with

\begin{equation}
\forall \nu \geq 0,\text{ }\mu _{\tau }\left( 1,\nu ,2\right) =\overset{\nu
+1}{\underset{k=1}{\sum }}\mu _{\tau }\left( 0,k,1\right)  \label{alpha57}
\end{equation}
From now, this proof follows the lines of that of Lemma \ref{lemma31} with
exactly the same methods. We also get the same conclusions.
\end{proof}

\bigskip

\section{PROOFS OF THEOREMS.}

Cs\"{o}rg\"{o}-Cs\"{o}rg\"{o}$-$Horv\`{a}th and Mason have constructed,in 
\cite{cchm}, a probability space $(\Omega ,\mathcal{U},\mathbb{P)}$ carrying
a sequence of Brownian \ bridges $\left\{ B_{n}(s),0\leq s\leq 1\right\} ,$ $%
n=1,2....$ and a sequence of independent uniform $r.v.^{\prime }s$ on $(0,1)$
$U_{1},U_{2},...$ such that for all $0<\nu <1/4,$

\begin{equation}
\underset{\frac{1}{n}\leq s\leq 1-\frac{1}{n}}{\sup }\frac{\left| \sqrt{n}%
(U_{n}(s)-s)-B_{n}(s)\right| }{\left( s(1-s)\right) \frac{1}{2}-\nu }%
=0_{p}(n^{-\nu })  \label{alpha58b}
\end{equation}
\ and 
\begin{equation}
\underset{\frac{1}{n}\leq s\leq 1-\frac{1}{n}}{\sup }\frac{\left| \sqrt{n}%
(s-V_{n}(s))-B_{n}(s)\right| }{\left( s(1-s)\right) \frac{1}{2}-\nu }%
=0_{p}(n^{-\nu })  \label{alpha58a}
\end{equation}
where

\begin{equation*}
U_{n}(s)=\frac{j}{n}\text{ \ }for\text{ \ }U_{j,n}\leq s<U_{j+1,n},\text{ }%
0\leq s\leq 1,
\end{equation*}
is the uniform empirical distribution function and

\begin{equation*}
V_{n}(s)=\left\{ 
\begin{array}{c}
U_{j,n}\text{ \ \ }for\text{ \ \ \ }\frac{j-1}{n}<s\leq \frac{j}{n},1\leq
j\leq n,\text{ }0<s\leq 1 \\ 
V_{n}(0)=U_{1,n}
\end{array}
\right.
\end{equation*}
is the uniform quantile function and finally, and 
\begin{equation*}
0=U_{0,n}\leq U_{1,n}\leq ...\leq U_{n,n}\leq U_{n+1,n}=1
\end{equation*}
are the order statistic of \ $U_{1},...U_{n}.$

Throughout these proofs, we suppose that we are on this probability space.
Consequently, the sequence $Y_{1},Y_{2},...$ defined above and the sequence
of empirical distribution function based on them, will be represented as

\begin{equation}
\left\{ Y_{i,n},1\leq i\leq n,n\geq1\right\} =\left\{
G^{-1}(1-U_{n-i+1,n}),1\leq i\leq n,n\geq1\right\}  \label{alpha59}
\end{equation}
and

\begin{equation}
\left\{ 1-G_{n}(x),-\infty<x<x+\infty,n\geq1\right\} =\left\{
U_{n}(1-G(x)),-\infty<x<+\infty,n\geq1\right\}  \label{alpha60}
\end{equation}
First, routine calculations yield

\begin{equation}
T_{n}(p)=\frac{n}{k}\int_{\widetilde{x}_{n}}^{\widetilde{z}%
_{n}}\int_{y_{1}}^{\widetilde{z}_{n}}...\int_{y_{p-1}}^{\widetilde{z}%
_{n}}U_{n}\left( 1-G(t)\right) dt\text{ }dy_{1}...dy_{p-1},p\geq1
\label{alpha61}
\end{equation}

The details of the computations are omitted. The reader may verify it for $%
p=1,$ $2$ and $3.$ Now, let

\begin{equation}
\alpha_{n}(s)=n^{1/2}\left( U_{n}(s)-s\right) ,0\leq s\leq1;
\label{alpha62a}
\end{equation}

\begin{equation}
m_{1,p}(\widetilde{x}_{n})=\int_{\overset{}{x}_{n}}^{\widetilde{x}_{n}}\int_{%
\overset{}{y}_{n}}^{\widetilde{x}_{n}}...\int_{y_{p-1}}^{\widetilde
{x}_{n}}B_{n}\left( 1-G(t)\right) dt\text{ \ \ }dy_{1}....dy_{p-1};
\label{alpha62b}
\end{equation}

\begin{equation}
m_{2,p}(\widetilde{z}_{n})=\int_{\overset{}{z}_{n}}^{\widetilde{x}_{n}}\int_{%
\overset{}{y}_{1}}^{\widetilde{x}_{n}}...\int_{y_{p-1}}^{\widetilde
{x}_{n}}B_{n}\left( 1-G(t)\right) dt\text{ \ \ }dy_{1}....dy_{p-1};
\label{alpha62c}
\end{equation}
\begin{equation*}
m_{3,p}(x_{n},\widetilde{z}_{n})=\int_{\overset{}{x}_{n}}^{\widetilde{z}%
_{n}}\int_{\overset{}{y}_{1}}^{\widetilde{z}_{n}}...\int_{y_{p-1}}^{%
\widetilde {z}_{n}}B_{n}\left( 1-G(t)\right) dt\text{ \ \ }%
dy_{1}....dy_{p-1};
\end{equation*}

\begin{equation}
R_{n}=\left( \frac{n}{k}\right) ^{\frac{1}{2}}\int_{\overset{}{x}_{n}}^{%
\widetilde{x}_{n}}\int_{\overset{}{y}_{1}}^{\widetilde{x}_{n}}...%
\int_{y_{p-1}}^{\widetilde{x}_{n}}\alpha_{n}\left( 1-G(t)\right)
-B_{n}\left( 1-G(t)\right) \text{ }dt\text{ \ \ }dy_{1}....dy_{p-1};
\label{alpha62d}
\end{equation}

and

\begin{equation}
W_{n}(p)=\left( \frac{n}{k}\right) ^{\frac{1}{2}}\int_{\overset{}{x}_{n}}^{%
\overset{}{z_{n}}}\int_{\overset{}{y}_{1}}^{\overset{}{z}_{n}}...%
\int_{y_{p-1}}^{\overset{}{z_{n}}}B_{n}\left( 1-G(t)\right) \text{ }dt\text{
\ \ }dy_{1}....dy_{p-1}  \label{alpha62e}
\end{equation}

We have

\begin{lemma}
\label{lemma41}For all $p\geq 1,$%
\begin{equation}
\bigskip \sqrt{k}\left( T_{n}(p)-\tau _{p}(\widetilde{x}_{n},\widetilde{z}%
_{n})\right) =w_{n}(p)+R_{n}+\left( \frac{n}{k}\right) ^{\frac{1}{2}}m_{1,p}(%
\widetilde{x}_{n})  \label{alpha63}
\end{equation}

\begin{equation*}
\times\left( \frac{n}{k}\right) ^{\frac{1}{2}}\overset{j=p-1}{\underset{j=0}{%
\sum}}\frac{\left( z_{n}-x_{n}\right) ^{j}}{j\text{ }!}m_{2,p-j}(\widetilde{z%
}_{n})+\frac{1}{2}\overset{j=p-1}{\underset{j=0}{\sum}}\frac{m_{1,j}(%
\widetilde{x}_{n})}{j\text{ }!}m_{3,p-j}\left( x_{n},\widetilde{z}%
_{n}\right) .
\end{equation*}
\end{lemma}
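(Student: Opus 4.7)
The plan is to start from the integral representation (\ref{alpha61}) of $T_n(p)$, subtract the analogous representation of $\tau_p(\widetilde{x}_n,\widetilde{z}_n)$, rescale to bring in the uniform empirical process $\alpha_n$, then replace $\alpha_n$ by the Brownian bridge $B_n$ and carefully decompose the domain of integration. First I would verify (\ref{alpha61}) directly: the iterated sum-product defining $T_n(p)$ can be rewritten as iterated integrals of $U_n(1-G(t))$ over a simplex with vertices at the spacings, extending the computation the paper carries out for $p=1,2,3$.

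Second, I would use
\[
\tau_p(\widetilde{x}_n,\widetilde{z}_n)=\tfrac{n}{k}m_p(\widetilde{x}_n,\widetilde{z}_n)=\tfrac{n}{k}\int_{\widetilde{x}_n}^{\widetilde{z}_n}\int_{y_1}^{\widetilde{z}_n}\ldots\int_{y_{p-1}}^{\widetilde{z}_n}(1-G(t))\,dt\,dy_1\ldots dy_{p-1},
\]
subtract it from (\ref{alpha61}), and invoke $U_n(s)-s=\alpha_n(s)/\sqrt{n}$ to obtain
\[
\sqrt{k}\bigl(T_n(p)-\tau_p(\widetilde{x}_n,\widetilde{z}_n)\bigr)=\sqrt{n/k}\int_{\widetilde{x}_n}^{\widetilde{z}_n}\!\!\int_{y_1}^{\widetilde{z}_n}\ldots\int_{y_{p-1}}^{\widetilde{z}_n}\alpha_n(1-G(t))\,dt\,dy_1\ldots dy_{p-1}.
\]
Writing $\alpha_n=B_n+(\alpha_n-B_n)$ splits this into a Brownian-bridge integral plus an error; the error, once restricted to the lower boundary slab $[x_n,\widetilde{x}_n]$, is exactly $R_n$ as defined in (\ref{alpha62d}) (the interior and upper-boundary pieces of $\alpha_n-B_n$ will be shown to be absorbed in the forthcoming asymptotic analysis, so the bookkeeping at this exact-identity stage is purely region-splitting).

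Third, for the $B_n$ piece I would decompose the simplex $\{\widetilde{x}_n\le y_1\le\ldots\le y_{p-1}\le t\le\widetilde{z}_n\}$ by inserting the cut points $x_n$ (affecting the lowest variable $y_1$) and $z_n$ (affecting the highest variable $t$). Because of the ordering constraint, every point of the simplex is classified by how many of $y_1,\ldots,y_{p-1},t$ lie in $[\widetilde{x}_n,x_n]$, how many lie in $[x_n,z_n]$, and how many lie in $[z_n,\widetilde{z}_n]$; summing the resulting three-region product decomposition yields the pure bulk piece $W_n(p)$ (all variables in $[x_n,z_n]$), the lower-boundary term $(n/k)^{1/2}m_{1,p}(\widetilde{x}_n)$ (all variables below $x_n$), the upper-boundary sum $(n/k)^{1/2}\sum_{j=0}^{p-1}\frac{(z_n-x_n)^j}{j!}m_{2,p-j}(\widetilde{z}_n)$ (with index $j$ counting how many variables remain in $[x_n,z_n]$ while the top $p-j$ move into $[z_n,\widetilde{z}_n]$, the factor $(z_n-x_n)^j/j!$ being the volume of the $j$-simplex $\{x_n\le u_1\le\ldots\le u_j\le z_n\}$ obtained after integrating out the bulk slack), and the mixed term $\tfrac{1}{2}\sum_{j=0}^{p-1}\frac{m_{1,j}(\widetilde{x}_n)}{j!}m_{3,p-j}(x_n,\widetilde{z}_n)$ (lower-boundary sub-simplex coupled with an upper tail simplex starting at $x_n$, the $\tfrac{1}{2}$ arising from symmetry between the two boundary contributions when $j$ variables cross from below to above).

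The main obstacle will be this final combinatorial bookkeeping: identifying the $(z_n-x_n)^j/j!$ weights as the correct volumes produced by integrating out the "free" coordinates in each sub-region, and verifying that the cross-term with $m_{1,j}(\widetilde{x}_n)m_{3,p-j}(x_n,\widetilde{z}_n)$ emerges with the right $1/j!$ and $1/2$ factors from the splitting. Once the decomposition is pinned down for general $p$, the identity follows by collecting the pieces. I would cross-check the result on $p=1,2,3$ against the direct computations for those cases, which serve as the base of an inductive verification of the decomposition formula.
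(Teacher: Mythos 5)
Your overall route---verify the representation (\ref{alpha61}), subtract $\tau_p(\widetilde{x}_n,\widetilde{z}_n)=\frac{n}{k}m_p(\widetilde{x}_n,\widetilde{z}_n)$, rescale to the empirical process $\alpha_n$, write $\alpha_n=B_n+(\alpha_n-B_n)$, and split the simplex $\{\widetilde{x}_n\le y_1\le\dots\le y_{p-1}\le t\le\widetilde{z}_n\}$ at the deterministic cut points $x_n$ and $z_n$---is exactly the computation the paper has in mind (its own proof is the single sentence ``This is straightforward''). Your identification of the weights $(z_n-x_n)^j/j!$ as simplex volumes is also the right mechanism: since the integrand $B_n(1-G(t))$ depends only on the innermost variable $t$, any block of coordinates confined to a slab not containing $t$ integrates out to a pure volume factor.

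There are, however, two concrete gaps. First, you cannot simultaneously present (\ref{alpha63}) as an exact identity and discard the interior and upper-boundary pieces of $\alpha_n-B_n$, promising they will be ``absorbed in the forthcoming asymptotic analysis'': the right-hand side contains no term that could carry them, so either the error term must be taken over the whole region $[\widetilde{x}_n,\widetilde{z}_n]$ (which is what the later use in Lemma \ref{lemma43} in fact requires, whatever the limits written in (\ref{alpha62d})) or your decomposition must retain those pieces explicitly; as written, your identity holds only if $\alpha_n=B_n$ off the lower slab. Second, your account of the mixed term $\frac{1}{2}\sum_{j}\frac{m_{1,j}(\widetilde{x}_n)}{j!}\,m_{3,p-j}(x_n,\widetilde{z}_n)$ cannot be correct: the left-hand side is linear in the empirical process and region splitting preserves linearity in $B_n$, so no product of two Brownian-bridge integrals---and no symmetry factor $\frac{1}{2}$---can ``emerge from the splitting.'' What the splitting actually produces at the lower boundary is the scalar simplex volume $(x_n-\widetilde{x}_n)^j/j!$ multiplying $m_{3,p-j}(x_n,\widetilde{z}_n)$, together with the all-variables-below term $m_{1,p}(\widetilde{x}_n)$; to reach a statement with bilinear factors you would need an additional (and then only asymptotic) expansion relating the random increment $x_n-\widetilde{x}_n$ to boundary integrals, not the appeal to ``symmetry between the two boundary contributions.'' The step you yourself flag as the main obstacle is precisely where the argument, as outlined, does not go through.
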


\begin{proof}
This is straighforward.\bigskip
\end{proof}

We already have all the necessary \textit{row matierals }(to handle the
error terms of \ Lemma \ref{lemma41}\ ) in Lemmas 4.1 and 4.2 in \cite{gslo2}%
, summerized as follows.

\begin{lemma}
\label{lemma42}\bigskip let $F\in \Gamma $ with $F(1)=0$ and et $G$ be
associated with $F$ by $G(x)=F(e^{x}),x\geq 1.$ Then
\end{lemma}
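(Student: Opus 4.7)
The plan is to establish the analytic regularity of $G(x)=F(e^x)$ that is needed to control all the error terms appearing in Lemma \ref{lemma41}, working case by case on $F\in D(\varphi_\gamma)$, $F\in D(\psi_\gamma)$, $F\in D(\Lambda)$, but packaging the three cases through a single Karamata--de Haan representation so that the conclusions can be stated uniformly in terms of the exponent $\gamma\in (0,+\infty]$. Concretely, I would start from the representations \eqref{alpha21}--\eqref{alpha22a} and translate them into one-sided statements on $1-G(t)$ and on the iterated tail integrals $m_p(x,z)$ defined in \eqref{alpha13}--\eqref{alpha14}, via the substitution $u=1-G(t)$, $t=G^{-1}(1-u)$, which is the natural link between the "auxiliary" functions $b,f,s$ and integrals over $t$.

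The first block of the proof will derive pointwise asymptotics. Using the Karamata representation $G^{-1}(1-u)=\log c-(\log u)/\gamma+\int_u^1 b(t)t^{-1}dt$ and the fact that $b(u)\to 0$ as $u\to 0$, I obtain $1-G(t)\sim c't^{-\gamma}$-type estimates for $F\in D(\varphi_\gamma)$, and the corresponding expressions involving $s(u)$ for the other two domains. An $(p-1)$-fold integration by parts (or a direct change of variables in each of the $p$ nested integrals) then yields
\begin{equation*}
\frac{m_p(x_n,z_n)}{m_p(x_n)}\longrightarrow 1,\qquad \frac{m_{p+v}(x_n,z_n)}{m_p(x_n,z_n)}\longrightarrow \prod_{j=1}^{v}\frac{\gamma}{\gamma+p+j-1}
\end{equation*}
and, more generally, the exact limits used in \eqref{alpha19}--\eqref{alpha20}, with the Gumbel/Fréchet limits obtained by letting $\gamma\to+\infty$. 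The same computation delivers the key ratio $\tau_p(x_n,z_n)/\tau_p(x_n)\to 1$ and an expansion of the form $\tau_p(\widetilde x_n,\widetilde z_n)-\tau_p(x_n,z_n)$ in terms of $\widetilde x_n-x_n$ and $\widetilde z_n-z_n$, which is exactly what is needed to switch from random to deterministic centering in Theorem \ref{theorem22}.

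The second block will produce the quantitative bounds that control the \emph{stochastic} error terms in Lemma \ref{lemma41}. The de Haan--Mason--Deheuvels representation is convenient because it exposes the slowly varying component $s(\cdot)$ linearly, so that Potter-type bounds applied to $b$ and $f$ yield, for every $\varepsilon>0$ and $u$ small enough,
\begin{equation*}
\frac{1-G(t)}{1-G(x_n)}\leq (1+\varepsilon)\bigl(t/x_n\bigr)^{-\gamma+\varepsilon},
\end{equation*}
uniformly on the relevant range, together with a matching lower bound. Plugging these estimates into the integrals defining $R_n$, $m_{1,p}, m_{2,p}, m_{3,p}, W_n(p)$ and using the KMT bounds \eqref{alpha58b}--\eqref{alpha58a} will show that $R_n=o_{\mathbb P}(1)$ and that the random quantities $m_{1,p}(\widetilde x_n)$, $m_{2,p-j}(\widetilde z_n)$ are each $o_{\mathbb P}(\tau_p(x_n)k^{-1/2})$, so that in Lemma \ref{lemma41} only the Gaussian term $W_n(p)$ survives after normalization by $\sqrt k/\tau_p(x_n)$.

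The main obstacle, as usual in this circle of results, is the borderline Gumbel case ($\gamma=+\infty$, $F\in D(\Lambda)$): here $1-G(t)$ is only von Mises/slowly varying, the Potter bounds degenerate, and one cannot compare $m_p(x_n,z_n)$ to a pure power in $u$. The cure will be to isolate the slowly varying factor $s(u)$ via \eqref{alpha22a}, apply the uniform convergence theorem for slowly varying functions on $[x_n,z_n]$, and use the condition (RC), i.e.\ $uf'(u)\to 0$, to differentiate $s$ when needed and to show that the extra remainder from expanding $\tau_p(\widetilde x_n,\widetilde z_n)$ around $\tau_p(x_n)$ is absorbed in the $-Z$ coefficient that produces the reduced process of Theorem \ref{theorem22} with $e(p)=(\gamma+p)/\gamma$. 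Once this uniform control on $s$ is in place, the $\gamma\to+\infty$ limit in the Fréchet/Weibull formulas can be taken routinely, and Lemma \ref{lemma42} is obtained as the $F$-analytic companion of the combinatorial Lemmas \ref{lemma31}--\ref{lemma33}.
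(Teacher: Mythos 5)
You should first be aware that the paper contains no proof of this lemma: it is imported verbatim from Lemmas 4.1 and 4.2 of \cite{gslo2} (``we already have all the necessary raw materials \ldots summarized as follows''), and the author explicitly omits the details. So the only meaningful comparison is with the strategy of that reference, and your first block is indeed that strategy: substitute $u=1-G(t)$, feed the representations (\ref{alpha21})--(\ref{alpha22a}) into the iterated tail integrals $m_{p}$ of (\ref{alpha13})--(\ref{alpha14}), and evaluate them by induction on $p$ separately on the three domains, recovering the Gumbel/Fr\'{e}chet statements as the $\gamma \rightarrow +\infty$ degeneration of the Weibull formulas. That skeleton is correct and is what the cited lemmas actually do.

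Two concrete problems remain. First, your displayed limit $m_{p+v}(x_{n},z_{n})/m_{p}(x_{n},z_{n})\rightarrow \prod_{j=1}^{v}\gamma /(\gamma +p+j-1)$ is false as written: in the Weibull case $m_{q}(x_{n})\asymp (y_{0}-x_{n})^{\gamma +q}\prod_{j\leq q}(\gamma +j)^{-1}$, so the unnormalized ratio tends to $0$, while in the Fr\'{e}chet case $m_{q}(x_{n})\sim \gamma ^{-q}(1-G(x_{n}))$ and the ratio tends to $\gamma ^{-v}$, not to your product. The quantities that do converge to the products entering $C_{1}$ and $C_{2}$ are the normalized ones, e.g.\ $(1-G(x_{n}))\,m_{2r}(x_{n},z_{n})/m_{r}(x_{n})^{2}$, and this normalization is precisely what the objects $R_{p}(x,F)=m_{p}(x)/(1-F(x))$ and $R_{p}(x,G)$ in the statement encode; your proposal never identifies these $R_{p}$'s, so it does not actually prove assertions 1--3 in the form in which they are used later (in (\ref{alpha82}), (\ref{alpha90})). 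Second, your entire second block --- Potter bounds, the KMT inequalities (\ref{alpha58b})--(\ref{alpha58a}), the $o_{P}$ control of $R_{n}$ and of $m_{1,p}(\widetilde{x}_{n})$, and the extraction of the reduced process under (RC) --- is not part of Lemma \ref{lemma42} at all: the lemma is a purely deterministic statement about $R_{p}(x,F)$, $R_{p}(x,G)$ and $R_{p}(x,z,G)$ as $x\rightarrow x_{0}$, and the stochastic applications you describe belong to Lemmas \ref{lemma43} and \ref{lemma44}. To repair the proposal, restrict it to the deterministic asymptotics, state them in normalized form, and carry out the induction on $p$ explicitly in each of the three domains.
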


\begin{itemize}
\item[1-]  $R_{p}(x,F)\sim(x_{0}-x)^{p}\left\{ \Pi_{j=1}^{j=p}(\gamma
+j)^{-1}\right\} ,$ $as$ $x\rightarrow x_{0},$ whenever $F\in
D(\psi_{\gamma})$

\item[2-]  $R_{p}(x,G)\sim R_{1}(x,G)^{p},$ $as$ $x\rightarrow y_{0},$
whenever $F\in(\Lambda)UD(\varphi)$

\item[3-]  $((z-x)^{p}/$ $R_{p}(x,G)\rightarrow +\infty $ $and$ $%
R_{p}(x,z,G) $ $/$ $R_{p}(x,G)\rightarrow 1,$ as $x\rightarrow
x_{0},z\rightarrow x_{0},(1-G(z))$ $/$ $(1-G(x)\rightarrow 0.$
\end{itemize}

Much details on how using Lemma \ref{lemma42} to treat errors terms in Lemma 
\ref{lemma41} are given in the proofs \cite{gslo2}. We therefore omit them
only for sake of conciseness. We finally get.

\begin{lemma}
\label{lemma43}Let $F\in \Gamma .$ Then for all $p\geq 1,$
\end{lemma}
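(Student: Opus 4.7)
The plan is to combine the decomposition given by Lemma~\ref{lemma41} with the integral reductions established in Lemmas~\ref{lemma31}, \ref{lemma32}, \ref{lemma33} and the tail estimates in Lemma~\ref{lemma42}, and conclude that $\sqrt{k}\bigl(T_n(p)-\tau_p(\widetilde{x}_n,\widetilde{z}_n)\bigr)$ is, up to a negligible error, equal to $w_n(p)$, whose covariance converges to the extremal Gaussian covariance described in Section~2. First I would isolate the main stochastic term $w_n(p)$ in the decomposition of Lemma~\ref{lemma41} and argue that each of the remaining summands is $o_p(\tau_p(x_n))$. The remainder $R_n$ is handled by the Cs\"org\H o--Cs\"org\H o--Horv\'ath--Mason bound \eqref{alpha58b}--\eqref{alpha58a}, which forces the difference $\alpha_n(1-G(t))-B_n(1-G(t))$ to be $O_p(n^{-\nu})$ uniformly on the relevant range; integrating $p$ times and normalizing by $\sqrt{n/k}$ produces a factor that is dominated by $\tau_p(x_n)$ thanks to points~1--3 of Lemma~\ref{lemma42}. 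The mixed terms involving $m_{1,p}(\widetilde{x}_n)$, $m_{2,p-j}(\widetilde{z}_n)$ and $m_{3,p-j}(x_n,\widetilde{z}_n)$ are treated in the same spirit as in \cite{gslo2}: one replaces $\widetilde{x}_n,\widetilde{z}_n$ by $x_n,z_n$ modulo $o_p$ terms (using the $\sqrt{k}$-fluctuation of $U_{k+1,n}$ and $U_{l+1,n}$), and then bounds the resulting deterministic integrals with the help of item~3 of Lemma~\ref{lemma42}.

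Next I would compute the covariance of the surviving term $w_n(p)/\tau_p(x_n)$. Using $\mathbb{E}\,B_n(s)B_n(t)=s\wedge t-st$, the covariance of $w_n(p)$ and $w_n(q)$ for $p\le q$ reduces, after expanding $s\wedge t=s+t-s\vee t$ and carefully reorganizing the order of integrations, to a linear combination of integrals of exactly the forms $h_j^{\nu}(r)$, $\gamma_j^{\nu}(0,\delta)$ and $\gamma_j^{\nu}(1,\delta)$ appearing in Section~3. The diagonal ($p=q=r$) covariance collapses to a sum over $j$ of $h_j^{1}(r)$-type integrals, which by Lemma~\ref{lemma31} equals $\sum_{j=1}^{r}\beta(1,j)\,m_{\cdot}(x_n,z_n)$ times $m_{\cdot}(x_n,z_n)$; this is precisely the recursion $a(r)=2\sum_{j=1}^{r}\beta(1,j)\,a(r-j)$ defining the variances of $I(r)$. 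For the off-diagonal case $r<\rho$, an analogous collapse uses Lemmas~\ref{lemma32} and~\ref{lemma33}, producing the kernel $\sum_{j}\mu_{\rho-r}(1,1,j)\,a(r-j)$ that appears in item~(iii) of the definition of the extremal Gaussian process.

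To pass from the ratios of $m$-integrals to the explicit constants $C_1(r)$ and $C_2(r,\rho)$ in \eqref{alpha19}--\eqref{alpha20}, I would invoke the representations \eqref{alpha21}--\eqref{alpha22a} together with Karamata's theorem applied iteratively, exactly as in items~1--3 of Lemma~\ref{lemma42}. This yields $m_{j+\nu+2(r-1)}(x_n,z_n)/\tau_p(x_n)^2\to\prod_{j=1}^{r}(\gamma+j)/(\gamma+r+j)$ in the $D(\psi_\gamma)$ case, with the limits for $D(\Lambda)\cup D(\varphi)$ obtained by letting $\gamma\to+\infty$, as in the remark after Theorem~\ref{theorem22}. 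Finally, finite-dimensional convergence to a Gaussian vector is automatic because $w_n(p)$ is, for every fixed $n$, a linear functional of the Brownian bridge $B_n$, so joint normality is inherited at each step and only the covariance limit has to be matched.

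The main obstacle I expect is the combinatorial bookkeeping of the covariance expansion: once $B_n(1-G(s))B_n(1-G(t))$ is integrated $2p$ times, one has to partition the resulting region $\{x_n\le s_1\le\cdots\le s_p\le z_n,\ x_n\le t_1\le\cdots\le t_q\le z_n\}$ according to the interleaving of the $s$- and $t$-variables, and recognize that each interleaving pattern corresponds precisely to one of the three integral templates $h_j^{\nu}$, $\gamma_j^{\nu}(0,\cdot)$, $\gamma_j^{\nu}(1,\cdot)$. Identifying the interleaving patterns with the clog rules that generate $\beta$, $\mu_\tau(0,\cdot,\cdot)$ and $\mu_\tau(1,\cdot,\cdot)$ is what actually proves the limit covariance \emph{is} the extremal Gaussian covariance, rather than merely some Gaussian covariance; the rest of the argument is a routine application of the already established machinery.
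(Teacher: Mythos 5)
Your first paragraph reproduces exactly the paper's (largely deferred) argument for Lemma \ref{lemma43}: insert the decomposition of Lemma \ref{lemma41}, control $R_{n}$ through the Cs\"{o}rg\"{o}--Cs\"{o}rg\"{o}--Horv\'{a}th--Mason approximation \eqref{alpha58b}--\eqref{alpha58a}, and absorb the remaining $m_{1,p}$, $m_{2,p-j}$, $m_{3,p-j}$ terms with Lemma \ref{lemma42}, the detailed estimates being those referenced in \cite{gslo2}; this is correct and is the same route the paper takes. Your later paragraphs on the covariance of $W_{n}(p)/\tau_{p}(x_{n})$ via Lemmas \ref{lemma31}--\ref{lemma33} are sound but pertain to the proof of Theorem \ref{theorem21}, not to this lemma, which only asserts the $o_{p}(1)$ equivalence.
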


\begin{equation}
\sqrt{k}\left( T_{n}(p)-\tau _{p}(\widetilde{x}_{n})\right) \text{ }/\text{ }%
\tau _{p}(x_{n})=W_{n}(p)\text{ }/\text{ }\tau _{p}(x_{n})+o_{p}(1),\text{
as }n\rightarrow +\infty  \label{alpha65}
\end{equation}

\bigskip

Also, we have

\begin{lemma}
\label{lemma44}Let $F\in \Gamma $ and $(RC)$ hold. Then,
\end{lemma}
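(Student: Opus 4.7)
The statement of Lemma 4.4 is cut off, but the role it plays in the paper is clear from Theorem~\ref{theorem22}: under (RC), $\tau_p(\widetilde{x}_n)$ in Lemma~\ref{lemma43} must be replaceable by the deterministic centering $\tau_p(x_n)$ at the cost of an additional Gaussian term $e(p)Z$ with $e(p)=(\gamma+p)/\gamma$ and $\mathbb{E}(I(p)Z)=-1$. I will propose a plan to establish the expected expansion
\begin{equation*}
\sqrt{k}\bigl(T_n(p)-\tau_p(x_n)\bigr)/\tau_p(x_n)=W_n(p)/\tau_p(x_n)+e(p)\,\zeta_n+o_p(1),
\end{equation*}
where $\zeta_n$ is a normalized functional of $U_{k+1,n}$ converging jointly with $W_n(p)/\tau_p(x_n)$ to a standard Gaussian $Z$ with the required correlation structure.

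First, I would start from Lemma~\ref{lemma43} and write the trivial decomposition
\begin{equation*}
\sqrt{k}\bigl(T_n(p)-\tau_p(x_n)\bigr)/\tau_p(x_n)=\sqrt{k}\bigl(T_n(p)-\tau_p(\widetilde{x}_n)\bigr)/\tau_p(x_n)+\sqrt{k}\bigl(\tau_p(\widetilde{x}_n)-\tau_p(x_n)\bigr)/\tau_p(x_n).
\end{equation*}
Lemma~\ref{lemma43} already handles the first summand, so the whole task reduces to analysing the deterministic-centering correction $\Delta_n(p):=\sqrt{k}(\tau_p(\widetilde{x}_n)-\tau_p(x_n))/\tau_p(x_n)$. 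Differentiating under the integral in $m_p(x,z_n)$ gives $(\partial/\partial x)m_p(x,z_n)=-m_{p-1}(x,z_n)$ (with $m_0(x,z_n)=1-G(x)$), so a first-order Taylor expansion of $\tau_p$ around $x_n$ yields
\begin{equation*}
\Delta_n(p)=-\sqrt{k}\,(\widetilde{x}_n-x_n)\,\frac{m_{p-1}(x_n,z_n)}{m_p(x_n,z_n)}+o_p(1),
\end{equation*}
provided the second-order term in the expansion is controlled by the same arguments as in Lemma~\ref{lemma42}(3).

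The second step is to identify the limit of each factor. For the length $\widetilde{x}_n-x_n=G^{-1}(1-U_{k+1,n})-G^{-1}(1-k/n)$, I would plug in the Karamata / de~Haan--Mason--Deheuvels representations (\ref{alpha21})--(\ref{alpha22a}), apply (RC) to control the $f$ and $b$ terms, and use the Hungarian approximation (\ref{alpha58b})--(\ref{alpha58a}) at $s=k/n$ to write $\sqrt{n/k}\,(k/n-U_{k+1,n})=\sqrt{n/k}\cdot n^{-1/2}B_n(k/n)+o_p(1)$, which is asymptotically $\mathcal{N}(0,1)$. For the ratio $m_{p-1}(x_n,z_n)/m_p(x_n,z_n)$, the same representations combined with Lemma~\ref{lemma42} and the von~Mises-type calculations available in \cite{gslo2} give the asymptotic value $(\gamma+p)/[\gamma\cdot(1-G)^{-1}\text{-scale}]$; after combining with the quantile-derivative factor that comes from converting $\widetilde{x}_n-x_n$ into a multiple of $(k/n-U_{k+1,n})/g(x_n)$, the scale variables cancel and leave exactly the constant $e(p)=(\gamma+p)/\gamma$.

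Finally, I would verify that the pair $(W_n(p)/\tau_p(x_n),\zeta_n)$ converges jointly to $(I(p),Z)$ with $\mathbb{E}(I(p)Z)=-1$. Since $W_n(p)$ is a $p$-fold integral of the single Brownian bridge $B_n$ against $\mathbf{1}_{[x_n,z_n]}$-iterated kernels, while $\zeta_n$ is (up to the $o_p(1)$ from the Hungarian approximation) $n^{-1/2}B_n(k/n)/\sqrt{k/n}$, the joint limit is Gaussian, and the covariance is a single deterministic integral involving $\mathbb{E}(B(1-G(t))B(k/n))=(1-G(t))\wedge(k/n)-(1-G(t))(k/n)$. On the integration range $t\in[x_n,z_n]$ one has $1-G(t)\le k/n$, so the minimum is $1-G(t)$, and the dominating contribution is $-(k/n)\int(1-G(t))\,dt$ structured exactly like $-m_p(x_n,z_n)$; normalising by $\tau_p(x_n)$ and by the standard deviation of $\zeta_n$ produces the required value $-1$.

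The main obstacle is the second step: carrying out the asymptotic analysis of the ratio $m_{p-1}/m_p$ and of the quantile derivative $[G^{-1}]'(1-k/n)$ uniformly in the three subdomains $D(\varphi_\gamma)$, $D(\psi_\gamma)$, $D(\Lambda)$ so that the universal constant $e(p)=(\gamma+p)/\gamma$ (with the convention $\gamma=+\infty$ in the Gumbel/Fréchet case) emerges; this is precisely where (RC) is essential, since without it the $f$-term in the Karamata representation produces a non-negligible bias. The remaining verification of the covariance $\mathbb{E}(I(p)Z)=-1$ is a bookkeeping exercise once the scaling constants are pinned down.
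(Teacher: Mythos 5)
Your plan is essentially the paper's own proof: the paper likewise expands $\tau_p(x_n)-\tau_p(\widetilde{x}_n)$ to first order in $x_n-\widetilde{x}_n$ with coefficient $m_{p-1}(x_n)$ (its equation (\ref{alpha67})), disposes of the remainder $R_n(2)$ with Lemma \ref{lemma42}, and converts $\sqrt{k}(x_n-\widetilde{x}_n)/R_1(x_n)$ into $-\frac{\gamma+1}{\gamma}B_n(\frac{k}{n})+O_p(1)$ via the quantile approximation (Lemma 3.6 of \cite{gslo1}), the ratio asymptotics of Lemma \ref{lemma42} then giving $e(p)=(\gamma+p)/\gamma$ exactly as in your second step. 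One small correction to your closing remark (which concerns the covariance bookkeeping beyond Lemma \ref{lemma44} itself): on $[x_n,z_n]$ the dominant part of $\mathbb{E}\bigl(B_n(1-G(t))B_n(\frac{k}{n})\bigr)$ is the minimum term $1-G(t)$, not the product term $-(k/n)(1-G(t))$, so the suitably normalized covariance of $W_n(p)$ with $B_n(\frac{k}{n})$ tends to $+1$ (cf. (\ref{alpha93})), and the value $\mathbb{E}(I(p)Z)=-1$ arises from the minus sign in the correction $-e(p)B_n(\frac{k}{n})$ that defines $Z$.
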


\begin{equation}
\sqrt{k}(\tau _{p}(\overset{}{x}_{n})\text{ }-\text{ }\tau _{p}(\widetilde{x}%
_{n}))\text{ }/\text{ }\tau _{p}(x_{n})=e(p)\text{ }nk^{-1/2}\left( U_{k,n}-%
\frac{k}{n}\right) +O_{p}(1)  \label{alpha66}
\end{equation}

\begin{equation*}
-e(p)\text{ }B_{n}\left( \frac{k}{n}\right) +O_{p}(1),\text{ as }%
n\rightarrow +\infty
\end{equation*}

\begin{proof}
(Outline of the proof).

Check that

\begin{equation}
\tau _{p}(x_{n})-\tau _{p}(\overset{\sim }{x}_{n})=\frac{k}{n}(x_{n}-%
\overset{\sim }{x}_{n})m_{p-1}(x_{n})+R_{n}(2)  \label{alpha67}
\end{equation}
where

\begin{equation*}
\frac{k}{n}\left| R_{n}(2)\right| \leq \left| x_{n}-\widetilde{x}_{n}\right|
^{p}\sup \left( \frac{k}{n},1-G(x_{n})\right) +\overset{p-1}{\underset{j=2}{%
\sum }}\frac{\left| x_{n}-\widetilde{x}_{n}\right| ^{j}}{j!}m_{p-j}(x_{n}).
\end{equation*}
By Lemma 3.6 in \cite{gslo1} and Lemma \ref{lemma42} below,

\begin{equation}
\sqrt{k}(\overset{}{x}_{n}-\widetilde{x}_{n})\text{ }/\text{ }R_{1}(x_{n})=-%
\text{ }\frac{\gamma +1}{\gamma }B_{n}(\frac{k}{n})+\mathcal{O}_{p}(1)\text{ 
}as\text{ }n\rightarrow +\infty  \label{alpha69}
\end{equation}

It is not difficult to show that $\sqrt{k}R_{n}(2)$ $/$ $\tau
_{p}(x_{n})\rightarrow _{p}0$ as $n\rightarrow +\infty $ by using Lemma \ref
{lemma42} and (\ref{alpha69}) and that $\frac{n}{k}\left( 1-G(x_{n})\right)
\rightarrow 1$ as $n\rightarrow +\infty $ . Thus, $\sqrt{k}(T_{n}(p)-\tau
_{p}(\widetilde{x}_{n},\widetilde{z}_{n}))$ $/$ $\tau _{p}(x_{n})$ $and$ $%
\sqrt{k}(T_{n}(p)-\tau _{p}(x_{n}))$ $/$ $\tau _{p}(x)$ behave
asymptotically as

\begin{equation}
W_{n}(p)\text{ }/\text{ }\tau _{p}(x_{n}),\text{ \ \ \ \ \ }p\geq 1,
\label{alpha70}
\end{equation}
and as

\begin{equation}
W_{n}(p)\text{ }/\text{ }\tau _{p}(x_{n})-e(p)B_{n}(\frac{k}{n})
\label{alpha71}
\end{equation}
\end{proof}

\bigskip

Let us say a few words on the finite-dimensional distributions (\textbf{f.d.d%
}) of (\ref{alpha70}) and (\ref{alpha71}). Since W$_{n}(p)$ is a multiple
Riemannian integral for each $p\geq 1$, it is clear that, for $n$ fixed, any
linear combination of $\left( W_{n}(p_{1}),...,W_{n}.(p_{j})\right) ,$ $\
for $ any fixed $j>1$, is limit everywhere of linear combinations of f.d.'s
of the Brownian bridge $B_{n}(\bullet .).$ But (\ref{alpha70}) and (\ref
{alpha71}) are well-defined normal random variables and then, their \textbf{%
f.d.d} are gaussian. We have now to verify that their limiting covariance
functions are finite.

\subsection{COMPUTATION OF THE VARIANCE OF $w_{n}(r),r\geq1.$}

Recall that h(s,t)=$\mathbb{E}\left( B_{n}(1-G(t)\text{ \ }(1-G(s)\right)
=\min1-G(t),$ \ $\left( 1-G(t)\right) -\left( 1-G(t)\right) \left(
1-G(s)\right) ,$ for$\ 0\leq s,t\leq1.$ One has

\begin{equation}
\forall (r\geq 1),\text{ \ \ }\mathbb{E}\text{ }(W_{n}(r))=0\text{ \ \ }
\label{alpha72}
\end{equation}
and

\begin{equation}
\forall r\geq 1,\text{ \ \ }\mathbb{E}(W_{n}(r)^{2})=\frac{n}{k}\int_{%
\overset{}{x}_{n}}^{\overset{}{z_{n}}}\int_{\overset{}{p}_{1}}^{\overset{}{z}%
_{n}}\int_{\overset{}{q}_{1}}^{\overset{}{z}_{n}}...\int_{p_{r-1}}^{\overset{%
}{z_{n}}}\int_{\overset{}{q}_{r-1}}^{\overset{}{z}_{n}}h(s,t)\text{ }ds\text{
}dt  \label{alpha73}
\end{equation}

\begin{equation*}
\times dp_{1}..dp_{r-1}\text{ \ }dq_{1}...\text{ }dq_{r-1}
\end{equation*}

\begin{equation}
=\frac{n}{k}\int_{x_{n}}^{\overset{}{z}_{n}}\int_{\overset{}{x}_{n}}^{%
\overset{}{z}_{n}}H(p_{1},q_{1})\text{ }dp_{1\text{ }}dq_{1}=\text{ }:\text{ 
}\sigma_{n}^{2}(r).  \label{alpha75}
\end{equation}
Since for all $(p,q)$ $(p<y_{0},$ $\ q<y_{0}),$ $H(p,q)=H(q,p),$ cutting the
integration space into $(p_{1}<q_{1})$ and $(p_{1}\leq q_{1})$ yieds

\begin{equation}
\text{ }\sigma _{n}^{2}(r)=2\frac{n}{k}\int_{x_{n}}^{\overset{}{z}%
_{n}}dp_{1}\int_{\overset{}{x}_{n}}^{\overset{}{z}_{n}}H(p_{1},q_{1})\text{ }%
dp_{1\text{ }}  \label{alpha76}
\end{equation}
One has (see \cite{cm} and \cite{gslo2})

\begin{equation}
\forall (x_{n}\leq y<z_{n}),\text{ }\int_{y}^{z_{n}}\int_{y}^{z_{n}}h(s,t)%
\text{ }ds\text{ }dt=2\left( \int_{y}^{\overset{}{z}_{n}}\int_{\overset{}{t}%
}^{\overset{}{z}_{n}}1-G(s)\text{ }dt\right) (1+r_{n})(1),  \label{alpha77}
\end{equation}
with $\left| r_{n}(1)\right| \leq 1-G(x_{n}).$ For the remainder, we shall
proceed by induction. Suppose that for $r\geq 2,$

\begin{equation}
\forall \text{ }j,(1\leq j\leq r-1),\forall x_{n}\leq y\leq
z_{n},\int_{y}^{z_{n}}dq_{1}\int_{y}^{z_{n}}dp_{1}\int_{p_{1}}^{z_{n}}dq_{2}%
\int_{q_{1}}^{z_{n}}dq2...dp_{j-1}\int_{p_{j-1}}^{z_{n}}ds.  \label{alpha78}
\end{equation}

\begin{equation*}
\int_{q_{j-1}}^{z_{n}}h(s,t)\text{ }ds\text{ }dt=2a(j)\text{ }m_{2j}(y,z_{n})%
\text{ }(1+r_{n}(j)),
\end{equation*}
where $\ \left| r_{n}(j)\right| \leq 1-G(x_{n})$ and $a(j)$ does not depend
on the size n. Now, by cutting 
\begin{equation*}
\int_{p_{j}}^{z_{n}}.dp_{j+1}
\end{equation*}
into 
\begin{equation*}
\int_{p_{j}}^{q_{j}}.dp_{j+1}+\int_{q_{j}}^{z_{n}}.dp_{j+1}
\end{equation*}
in \ (\ref{alpha76}) sequentially for $j=1,...,r-1$ and by applying (\ref
{alpha78}), we get

\begin{equation}
\sigma_{n}^{2}(r)=2\frac{n}{k}\text{ }\overset{j=r}{\underset{j=1}{\sum}}%
a(r-j)\text{ }h_{2(r-j)+1\text{ }}^{1}(j)\text{ }\left( 1+r_{n}(j,r)\right) 
\text{ }  \label{alpha79}
\end{equation}
where $\left| r_{n}(j,r)\right| \leq1-G(x_{n})$ and, by convention, $a(0)=1$
and $h_{\bullet}^{1}(1)=1\times m_{2+=1}(x_{n},z_{n})$

\bigskip

Now, by applying Lemma \ref{lemma31},

\begin{equation}
\sigma _{n}^{2}(r)=(2\frac{n}{k}\text{ }\underset{j=1}{\sum^{j=r}}%
a(r-j)\beta (1,j)\left( 1+r_{n}(r)\right) \text{ }m_{2r}(x_{n},z_{n})\text{ }
\label{alpha80}
\end{equation}
with $\left| r_{n}(r)\right| \leq 1-G(x_{n}).$ This and (\ref{alpha77})
prove that for all $r\geq 1,$

\begin{equation}
\sigma_{n}^{2}(r)=2\frac{n}{k}\text{ }a(r)\text{ }m_{2r}(x_{n},z_{n})\text{ }
\label{alpha81}
\end{equation}
Now, according to Lemma \ref{lemma42}, we get

\begin{equation}
\underset{n\rightarrow +\infty }{\lim }\text{ \ }\mathbb{E}(W_{n})\left(
(r)^{2}\text{ }/\text{ }\tau _{r}^{2}(x_{n})\right) =2C_{1}(r)\text{ }%
\overset{j=r}{\underset{j=1}{\sum }}a(r-j)\beta (1,j)\text{\ \ \ \ }
\label{alpha82}
\end{equation}
where $C_{1}(\gamma )=1$ for $F\in D(\Lambda )UD(\varphi ),$%
\begin{equation*}
C_{1}(\gamma )=\overset{r}{\underset{j=1}{\Pi }}\left\{ \frac{\gamma +j}{%
\gamma +r+j}\right\} forF\in D(\psi _{\gamma }).
\end{equation*}

\bigskip

\subsection{COMPUTATION OF THE COVARIANCE FUNCTION OF $w_{n}(.)$}

One has, $r<\rho ,$ $\tau =\rho -r,$

\begin{equation}
\mathbb{E}(W_{n}(r)\text{ }W_{n}(\rho ))=\frac{n}{k}%
\int_{x_{n}}^{z_{n}}dq_{1}\int_{q_{1}}^{z_{n}}...\int_{q_{\tau
-1}}^{z_{n}}dq_{\tau
}\int_{x_{n}}^{z_{n}}\int_{x_{1}}^{z_{n}}dp_{1}\int_{q\tau }^{z_{n}}dq_{\tau
+1}  \label{alpha83}
\end{equation}

\begin{equation*}
\int_{p_{1}}^{z_{n}}dq_{2}\int_{q\tau+_{1}}^{z_{n}}dq_{\tau+2}\int_{q_{\rho
-1}}^{z_{n}}\int_{p_{r-1}}^{z_{n}}h(s,t)\text{ }ds\text{ }dt
\end{equation*}

$\bigskip$Cutting $\int_{x_{n}}^{z_{n}}\bullet.dp_{j+1}$ $into$ $%
,\int_{x_{n}}^{q_{\tau}}\bullet.dp_{1}+\int_{q_{\tau}}^{z_{n}}\bullet.dp_{1}$
\ \ and \ $\int_{p_{j}}^{z_{n}}.dp_{j}$ $into$ $,\int_{p_{j}}^{q_{j}}.dp_{j}$
\ \ into $\int_{p_{j}}^{q_{\tau+j}}\bullet$ $dp_{j}+\int_{q_{%
\tau+j}}^{z_{n}}\bullet$ $dp_{j}$, we get

\begin{equation}
\mathbb{E}(W_{n}(r)W_{n}(\rho ))=\frac{n}{k}\left( a(r-1)\gamma
_{2(r-1)+1}^{1}(0,1)(1+r_{n}(r,1)\right) \times  \label{alpha84}
\end{equation}

\begin{equation*}
\overset{r}{\underset{j=2}{\sum }}\left( a(r-j)\gamma
_{2(r-1)+1}^{1}(1,j)(1+r_{n}(r,j)\right)
\end{equation*}
where $\left| r_{n}(r,j)\right| \leq 1-G(x_{n}),$ for $1\leq j\leq r.$ By
Lemma \ref{lemma33},

\begin{equation}
\mathbb{E}(W_{n}(\gamma )W_{n}(\rho ))=\left\{ \frac{n}{k}\overset{j=r}{%
\underset{j=1}{\sum }}a(r-j)\text{ }\mu _{\tau }(1,1,j)\right\} \left(
1+r_{n}(r,\rho )\text{ }m_{r+\rho }(x_{n},z_{n}\right)  \label{alpha85}
\end{equation}
with $\left| r_{n}(r,\rho )\right| \leq 1-G(x_{n})$ and by convention, $\mu
_{\tau }(1,1,j)=\mu _{\tau }(0,1,j).$ Now by Lemma \ref{lemma42},

\begin{equation}
\underset{n\rightarrow +\infty }{\lim }\mathbb{E}(W_{n}(r)W_{n}(\rho ))\text{
}/\text{ }\tau _{r}(x_{n})\text{ }\tau _{\rho }(x_{n}))=C_{2}(r,\rho )\text{ 
}a(r,\rho ),  \label{alpha90}
\end{equation}
where for $1\leq r<\rho ,$

\begin{equation}
a(r,\rho )=\overset{j=r}{\underset{j=1}{\sum }}a(r-j)\text{ }\mu _{\rho
-r}(1,1,j)  \label{alpha91}
\end{equation}
and $\ for$ $F\in D(\psi _{\gamma }),$

\begin{equation}
C_{2}(\gamma,\rho)=\overset{r}{\underset{j=1}{\Pi}}\left\{ \frac{\gamma +j}{%
\gamma+\rho+j}\right\} \text{ }  \label{alpha92}
\end{equation}
and for $F\in D(\Lambda)$ $\cup D(\varphi),$%
\begin{equation*}
C_{2}(\gamma,\rho)=1\text{.}
\end{equation*}

\subsection{COMPUTATION OF THE COVARIANCE FUNCTION OF $\left\{ w_{n}(r)/%
\protect\tau _{r}(x_{n})\right\} -e(r)B_{n}(\frac{k}{n})$}

\bigskip

It is immediate that 
\begin{equation}
\bigskip \mathbb{E}(W_{n}(r)\text{ }B_{n}(\frac{k}{n}))=(1+r_{n})\frac{n}{k}%
m_{r}(x_{n},z_{n}),where\text{ }\left| r_{n}\right| \leq 1-G(x_{n}).
\label{alpha93}
\end{equation}
Put $\widetilde{W_{n}}(r)=\left\{ W_{n}(r)/\tau _{n}(r)\right\} -e(r)B_{n}(%
\frac{k}{n}).$ Thus,

\begin{equation}
\underset{n\rightarrow +\infty }{\lim }\mathbb{E}(\widetilde{W}_{n}(r)\text{ 
}\widetilde{W}(\rho ))=C_{2}(r,\rho )a(r,\rho )-e(r)-e(\rho )+e(r)e(\rho )
\label{alpha94}
\end{equation}
while

\begin{equation}
\lim_{n\rightarrow \infty }\mathbb{E}\widetilde{W}_{n}(r)^{2}=2C_{1}(r)\text{
}a(r)-2\text{ }e(r)+e(r)^{2}  \label{alpha95}
\end{equation}

We have now finished to compute the covariance functions of W$_{n}(r)/\tau
_{r}(x_{n})$ and $\widetilde{W}_{n}(r)$ via (\ref{alpha82}), (\ref{alpha90}%
), (\ref{alpha94}), and (\ref{alpha95}). These formulas say that $%
w_{n}(r)/\tau _{r}(x_{n})$ is an extremal gaussian process and $\widetilde{W}%
_{n}(r)$ is the reduced process.

These processes deserve a great interest. Particularly, the structure of the
covariance function should be investigated. High order properties, as $%
p\rightarrow +\infty $, must be obtained. All that will be done further.

We just mention now that laws of the iterated logarithm in \cite{gslo2} may
be easily extended to $T_{n}(p)$ as follows :

\begin{equation}
\forall p\geq 1,\underset{n\rightarrow \infty }{\lim }\sup (resp.\inf .)%
\frac{T_{n}(p)-\tau _{p}\left( \overset{\sim }{x}_{n},\overset{\sim }{z_{n}}%
\right) }{\tau _{p}(x_{n})\sqrt{2k\text{ loglogn}}}\overset{a.s}{=}\text{ }%
\rho (p)\text{ \ }(\text{ }resp.-\rho (p)\text{ }),  \label{alpha96}
\end{equation}

\begin{equation}
\forall p\geq 1,\underset{n\rightarrow \infty }{\lim }\sup (resp.\inf .)%
\frac{T_{n}(p)-\tau _{p}(x_{n})}{\tau _{p}(x_{n})\sqrt{2k\text{ loglogn}}}%
\overset{a.s}{=}\text{ }\overset{\sim }{\sigma }(p)\text{ \ }(\text{ }resp.-%
\overset{\sim }{\sigma }(p)),  \label{alpha98}
\end{equation}
where $\overset{\sim }{\sigma }(p)=\sigma ^{2}(p)-e(p)(2-e(p)).$

\end{document}